\documentclass[journal]{IEEEtran}
\usepackage{amsmath,amsfonts,amssymb,amsthm}
\usepackage[ruled,vlined,linesnumbered,noend]{algorithm2e}
\usepackage{booktabs}
\usepackage{multirow}
\usepackage{makecell}
\usepackage{capt-of}
\usepackage{array}
\usepackage[caption=false,font=normalsize,labelfont=sf,textfont=sf]{subfig}
\usepackage{textcomp}
\usepackage{stfloats}
\usepackage{url}
\usepackage{verbatim}
\usepackage{graphicx}
\usepackage{cite}
\usepackage{xcolor}
\usepackage[hidelinks]{hyperref}
\newtheorem{lemma}{Lemma}
\newtheorem{proposition}{Proposition}

\begin{document}

\title{ZOAF: Towards Efficient Zeroth-Order Optimization for Analog/RF Circuit
Design}

\author{Liyan~Tan,
        Yequan~Zhao,
        Jinming~Lu,
        Ben~F.~Jamroz,
        Ari~Feldman
        and~Zheng~Zhang% <-this % stops a space
\thanks{This work was supported by NIST under Award \#70NANB24H084}% <-this % stops a space
\thanks{L. Tan, Y. Zhao, J. Lu, and Z. Zhang are with the Department of Electrical and Computer Engineering, University of California, Santa Barbara, CA 93106 USA (e-mail: liyan\_tan@ucsb.edu; yequan\_zhao@ucsb.edu; jinminglu@ucsb.edu; zhengzhang@ece.ucsb.edu).}% <-this % stops a space
\thanks{B. F. Jamroz and A. Feldman are with the National Institute of Standards and Technology (NIST), Boulder, CO 80305 USA (e-mail: benjamin.jamroz@nist.gov; ari.feldman@nist.gov).}% <-this % stops a space% 
% \thanks{Manuscript received February XX, 2026; revised XXXX XX, 2026.}
}
% The paper headers
% \markboth{IEEE Transactions on Computer-Aided Design of Integrated Circuits and Systems,~Vol.~XX, No.~XX, Month~202X}%
% {Tan \MakeLowercase{\textit{et al.}}: ZOAF: Towards Efficient Zeroth-Order Optimization for Analog/RF Circuit Design}

% \IEEEpubid{0000--0000/00\$00.00~\copyright~2021 IEEE}
% Remember, if you use this you must call \IEEEpubidadjcol in the second
% column for its text to clear the IEEEpubid mark.

\maketitle

\begin{abstract}
Circuit optimization is an indispensable step in analog/RF IC design. Classical fast gradient-based optimization methods are typically infeasible due to lack of access to simulator source code and the technical barriers to implementing adjoint methods. Therefore, surrogate-based black-box optimization is widely used in practice; however, it can be costly to build and sensitive to hyperparameters, whereas population heuristics often suffer from slow convergence and large evaluation counts under tight simulator-call budgets. To address these limitations, we propose the Zeroth-Order Analog/RF Framework (ZOAF), which recovers gradient-descent directions from a small number of black-box circuit simulations, combining the benefits of both gradient-based optimization and black-box optimization. We also employ several surrogate-free techniques to improve the efficiency and accuracy, including (1) a hybrid ZO scheduling method that switches between random-direction ZO for budget-efficient exploration and coordinate-wise ZO for accurate late-stage refinement, (2) one-shot quasi-random multi-start to focus evaluations, and (3) a sliding-window monitor that triggers early stops and box-projected updates to maintain feasibility. Evaluated on three distinct schematics, ZOAF consistently outperforms state-of-the-art baselines, achieving the best median final value on every reported figure of merit---with up to an order-of-magnitude advantage in median peaking on the 22-parameter two-stage amplifier---together with the most robust worst-case behavior across seeds, while reducing simulator calls to convergence by $1.3$--$3.8\times$. Code is publicly available at \url{https://github.com/LiyanTan111/ZOAF}.

\end{abstract}

\begin{IEEEkeywords}
Analog/RF circuit optimization, Analog circuit sizing, Zeroth-order optimization, Black-box optimization, Electronic design automation
\end{IEEEkeywords}

\section{Introduction}
\IEEEPARstart{A}{nalog/RF} circuit optimization is a core step in design closure: every tapeout depends on meeting targets for gain, bandwidth, phase margin, noise, and power across process, voltage, and temperature (PVT) corners and with layout parasitics~\cite{zhang2014stochastic,zhang2013stochastic}. In modern flows, accurate answers require SPICE/EM co-simulation, corner/Monte-Carlo sweeps, and layout-aware parasitic extraction---each call consumes minutes to hours---so any optimizer must spend its evaluation budget carefully \cite{daniel2004multiparameter,cui2018stochastic, li2005compact}. The search landscape is highly nonconvex with mixed continuous/discrete knobs, hard bounds, and occasional simulator non-convergence, and the key figures of merit often trade off against one another. These characteristics make the problem both expensive and brittle for general-purpose optimizers.

A long line of approaches has emerged. Model-based (intrusive) methods exploit analytical sensitivities or adjoint methods to enable gradient-descent updates. It is assumed that a designer can access and modify the codes of a simulator (e.g., HSPICE) to implement an adjoint method for fast gradient computation~\cite{wu1997circuit,georgieva2002feasible,nikolova2004adjoint,koziel2013reliable}.\footnote{Certain commercial software are identified in this paper to foster understanding. Such identification does not imply recommendation or endorsement by the National Institute of Standards and Technology, nor does it imply that the software identified are necessarily the best available for the purpose.} This has created a high barrier for practical deployment. Firstly, simulator codes are often unavailable to designers due to IP issues. Secondly, many designers may not have the required numerical analysis expertise to implement an adjoint-based optimization framework. Simulation-based stochastic methods---simulated annealing (SA)~\cite{gielen2002analog}, evolutionary algorithm (EA)~\cite{liu2009analog,barros2010analog}, genetic algorithm (GA)~\cite{zhou2022analog}, particle swarm optimization (PSO)~\cite{fakhfakh2010analog}, and covariance matrix adaptation evolution strategy (CMA-ES)~\cite{hansen2003reducing}---treat the circuit or system simulator as a black box and are robust to nonconvexity and mixed variables, yet they typically consume many evaluations and can struggle with constraints and noisy responses. Surrogate-assisted methods, most notably Gaussian-process (GP) Bayesian optimization (BO), improve sample efficiency by learning response surfaces and acquisition rules \cite{lyu2017efficient,lyu2018batch,zhang2019efficient}; however, kernel/hyperparameter selection, constrained and categorical design spaces, and scaling with dimension and batch parallelism introduce practical sensitivity. In realistic flows, reproducibility and manual tuning overhead also become nontrivial. Trust-region BO (e.g., TuRBO) was proposed to improve robustness in higher dimensional problems \cite{touloupas2021locomobo, eriksson2019scalable}.

These observations highlight four practical bottlenecks: (i) limited or no access to simulator gradients; (ii) tight evaluation budgets, further amplified by corners, Monte Carlo, and EM back-annotation~\cite{zhang2013uncertainty}; (iii) irregular, multi-modal, and sometimes non-smooth objectives caused by operating-region changes and solver artifacts; and (iv) hard constraints and box bounds that must always be satisfied.
In parallel, machine-learning-based methods (especially deep reinforcement learning and learned surrogates) have demonstrated strong potential for automated analog sizing; however, their practical effectiveness often depends on substantial offline training, reward/constraint engineering, and distributional alignment between training and deployment tasks.
Such assumptions are reasonable in repeated-design regimes, but can be restrictive in cold-start scenarios where a new circuit must be optimized quickly under a strict simulator-call budget and limited manual tuning.
A natural response is to treat the simulator strictly as a black box while still extracting useful search directions.
Zeroth-order (ZO) optimization takes exactly this stance: it estimates descent directions from a few perturbed function values, requires no gradients or adjoints, integrates cleanly with box projections, and can be engineered to be both budget-aware and noise-tolerant~\cite{liu2020primer,chen2017zoo}.
ZO further avoids simulator hooks, generalizes across topologies and device models, and lets us balance exploration and refinement through the choice of perturbation directions~\cite{zhao2023tensor, liu2020primer}.
ZO has been extensively studied in signal processing and machine learning, including black-box adversarial attacks~\cite{chen2017zoo, chen2019zo}, on-device training~\cite{zhao2025poor}, and large language model (LLM) fine-tuning~\cite{zhang2024revisiting, malladi2023fine, yang2024adazeta}. Nevertheless, to the best of our knowledge, ZO optimization has not yet been employed as an optimization primitive in analog/RF EDA flows.

Motivated by these challenges, we propose \textbf{ZOAF}, a multi-start projected zeroth-order (ZO) framework tailored to analog/RF sizing.
Our contributions are threefold.
First, we design a hybrid ZO schedule that uses ZO-RGE (\emph{random-direction gradient estimation}) for budget-efficient global exploration and switches to ZO-CGE (\emph{coordinate-wise gradient estimation}) for accurate local refinement.
Second, we introduce a one-shot quasi-random multi-start scheme with rank-based restarts that reuses a shared initialization pool and allocates simulator calls to promising regions.
Third, we develop sliding-window adaptation with constraint-aware acceptance, which tunes step and perturbation radii, triggers early restarts when progress stalls, and enforces box feasibility via clipped, greedily accepted updates.

Across three representative benchmarks, ZOAF consistently delivers a stronger accuracy–efficiency trade-off than GP-based BO and evolutionary baselines.
On the RF matching circuit, ZOAF attains the deepest return loss under the same evaluation budget, improving the best GP and surrogate-assisted EA competitors by over $10$\,dB.
On the op-amp and filter tasks, it matches or surpasses the best methods in terms of DC gain, gain--bandwidth product (GBP), and peaking/ripple/overshoot, while converging in $1.3$--$3.8\times$ fewer simulator calls and up to orders-of-magnitude less runtime.
These results indicate that the proposed coarse-to-fine projected zeroth-order schedule reliably converts tight evaluation budgets into high-quality analog/RF designs.
Importantly, the three design choices above align with the empirical findings: hybrid ZO scheduling improves early-to-late convergence, rank-based multi-start improves robustness to local traps, and sliding-window adaptation improves budget utilization and final solution quality.

\section{Related Work}\label{sec:relatedwork}

\subsection{Classical Bayesian Optimization (BO)}

Bayesian optimization (BO) is a sequential framework for expensive black-box objectives, including analog/analog-RF circuit sizing~\cite{rasmussen2003gaussian,lyu2017efficient}. It fits a probabilistic surrogate of $f$ (typically a Gaussian process) on observations $\mathcal{D}_N=\{(x_i,f(x_i))\}_{i=1}^N$, then maximizes an acquisition function to select the next design $x_{N+1}$~\cite{lyu2018batch,yin2024ado}. With a GP prior $f(\cdot)\sim\mathcal{GP}(m(\cdot),k_\theta(\cdot,\cdot))$ and noisy observations $y_i=f(x_i)+\varepsilon_i$, the posterior mean $\mu_N(x)$ and uncertainty $s_N(x)$ drive candidate selection. A standard acquisition is expected improvement (EI):
\begin{equation}
\label{eq:ei}
\mathrm{EI}(x)
= \mathbb{E}\bigl[\max\{f(x)-f^\star,0\}\,\big|\,\mathcal{D}_N\bigr],
\end{equation}
where $f^\star$ is the best observed value.

Other common acquisitions include probability of improvement (PI) and lower confidence bound (LCB).
Beyond the basic GP--EI loop, BO variants address practical circuit-design needs. Constrained and multi-objective BO handle specification limits and conflicting metrics~\cite{lyu2018multi}. Sparse/approximate GP and multi-fidelity BO improve scalability by reducing surrogate-update cost and combining fast/slow models~\cite{liu2013gaussian,zhang2014stochastic,zhang2019efficient,lyu2018multi}. For higher-dimensional parameter spaces, local trust-region BO (e.g., TuRBO) restricts search to adaptive local boxes and improves robustness~\cite{eriksson2019scalable,touloupas2021locomobo}.

\subsection{Classical Evolutionary Algorithms (EA)} Evolutionary algorithms (EA) are population-based, gradient-free methods that iteratively sample, evaluate, and select candidate designs. For continuous analog and analog/RF sizing, a common representative is evolution strategies (ES), which maintain a search center $m_t$ and step size $\sigma_t$, sample offspring around $m_t$, and recombine the best individuals. A simple $(\mu,\lambda)$–ES can be written as 

\vspace{-10pt}
\begin{subequations}\label{eq:es_sample_recombine}
\begin{align}
x_k &= m_t + \sigma_t z_k,\quad z_k\sim\mathcal{N}(0,I),\quad k=1,\ldots,\lambda, \label{eq:es_sample_recombine_x}\\
m_{t+1} &= \sum_{i=1}^{\mu} w_i\, x_{i:\lambda}. \label{eq:es_sample_recombine_m}
\end{align}
\end{subequations}
where $x_{i:\lambda}$ denotes the $i$-th best offspring by objective value and $w_i>0$ are recombination weights with $\sum_i w_i=1$. Modern ES variants such as CMA-ES adapt mutation covariance to local geometry and are widely used baselines for difficult continuous optimization~\cite{hansen2003reducing}. In analog/RF sizing, ES and related real-coded EAs (GA, differential evolution (DE), and PSO) are widely applied to optimize transistor- and component-level parameters under SPICE/ADS constraints~\cite{liu2009analog,barros2010analog,fakhfakh2010analog,zhou2022analog,nicosia2007evolutionary}. To reduce simulation cost, surrogate-assisted EAs combine evolutionary search with regression surrogates, including GP-assisted expensive optimization and circuit-specialized frameworks such as GASPAD~\cite{liu2013gaussian,liu2014gaspad,liu2013efficient,akinsolu2019parallel}. These methods provide an effective evolutionary alternative to pure BO pipelines for automated analog/RF optimization.

\subsection{Machine-Learning-Based Approaches (ML)}
Machine-learning-based methods are another major route for automated analog/analog-RF sizing, especially when repeated tasks justify up-front training cost. One line of work formulates sizing as a sequential decision process and learns policies from simulator feedback, including deep-RL frameworks such as \emph{L2DC}~\cite{wang2018learning}, \emph{AutoCkt}~\cite{settaluri2020autockt}, and actor--critic pipelines such as \emph{DNN-Opt}~\cite{budak2021dnn}. Later variants further improve robustness using circuit-aware representations, e.g., electrical-state features with sparse rewards~\cite{Uhlmann_2022} and circuit-attention mechanisms for variation-aware sizing~\cite{li2021circuit}.

In parallel, learning-augmented surrogates reduce expensive simulator queries by replacing part of exploration with predictions. Parasitic-aware methods use graph neural networks to estimate post-layout effects and combine uncertainty with BO-style acquisition for better sample efficiency~\cite{liu2021parasitic,gao2024post}. Related graph-based approaches such as \emph{GCX} build semi-supervised surrogates over circuit instances to guide search in constrained spaces~\cite{shahane2023graph}. Overall, ML-based methods are effective with sufficient data and careful reward/constraint design, while training-free black-box optimizers remain complementary in cold-start, tight-budget settings.

\subsection{Zeroth-Order (ZO) Optimization}
While BO, EA, and ML models have dominated black-box circuit sizing, they typically do not exploit the local geometric landscape (i.e., gradients) of the objective function. Zeroth-order (ZO) optimization bridges this gap by mimicking first-order gradient descent using only functional queries~\cite{flaxman2004online,nesterov2017random,liu2020primer}. Instead of relying on analytical sensitivities or intrusive adjoint methods, ZO methods estimate descent directions through finite-difference approximations along random or coordinate-wise perturbation vectors. By strictly controlling the perturbation radius and the sampling distribution, ZO estimators can recover mathematically unbiased gradients of a smoothed objective surrogate~\cite{chen2019zo,liu2020primer}.

Recently, ZO optimization has witnessed tremendous success in machine learning and signal processing, particularly in scenarios where backpropagation is infeasible or memory-prohibitive~\cite{liu2020primer}. Prominent applications include black-box adversarial attacks on deep neural networks~\cite{chen2017zoo}, resource-constrained on-device training~\cite{zhao2025poor,zhao2023tensor}, and memory-efficient fine-tuning of large language models~\cite{malladi2023fine,zhang2024revisiting,yang2024adazeta}. Despite its proven scalability and robustness to noisy, high-dimensional black-box functions, ZO optimization remains largely unexplored as a foundational optimization primitive in analog and RF EDA flows.

In the context of circuit sizing, ZO optimization offers a unique set of advantages: it bypasses the need to build and invert complex Gaussian process covariance matrices (a bottleneck in BO), avoids the sample-heavy heuristics of population-based EAs, and requires zero offline training. By explicitly approximating the local gradient, ZO methods can efficiently navigate the nonconvex ridges of circuit performance spaces, motivating the design of our proposed ZOAF framework.

\begin{figure*}[!t]
  \centering
  \includegraphics[width=\textwidth]{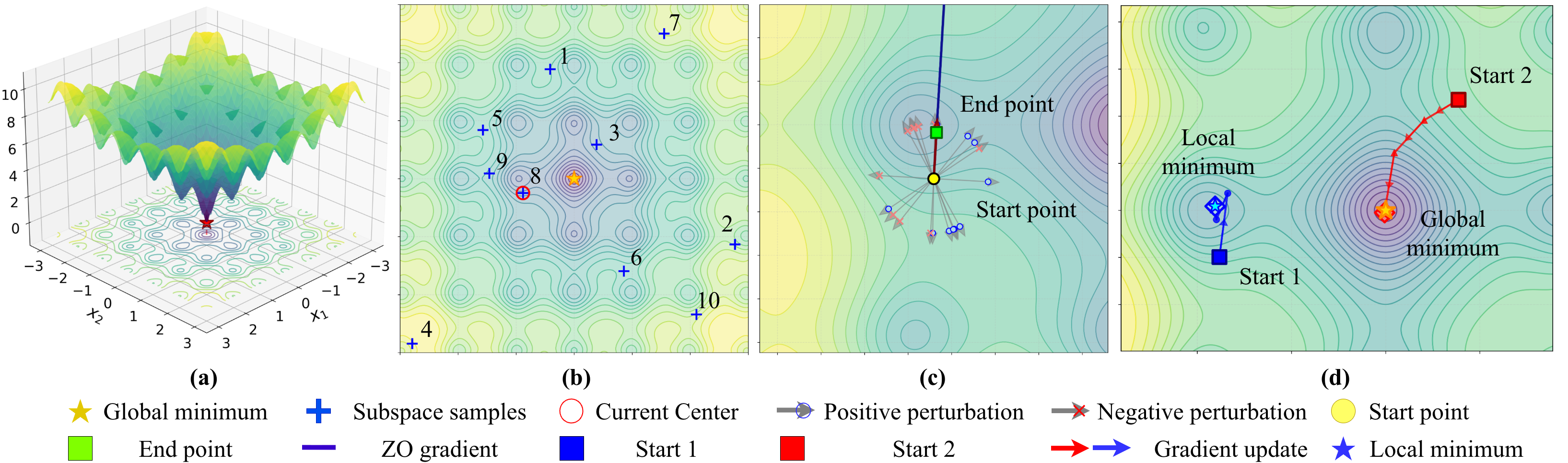}
  \vspace{-4mm}   
  \caption{Illustration of the proposed multi-start clipped ZO optimizer.
  \textbf{(a)} 2-D Ackley objective landscape.
  \textbf{(b)} One-shot initial sampling of candidate starting designs.
  \textbf{(c)} First stage: start from the top-ranked design and perform one ZO estimation.
  \textbf{(d)} Two example trajectories: the first start  gets trapped in a local minimum, the second start point reaches the global minimum.}
  \label{fig:optimization_diagram}
\end{figure*}

\section{Proposed Method}\label{sec:method}

We consider deterministic, single-objective black-box optimization for analog/RF circuit design. 
Let $x\in\mathbb{R}^d$ be the design vector (device sizes, biasing, $R$/$C$ values, etc.).
We focus on box-constrained problems:

\vspace{-10pt}

{\setlength{\abovedisplayskip}{2pt}%
 \setlength{\abovedisplayshortskip}{2pt}%
\begin{equation}
\label{eq:single_obj_box}
\max_{x\in\Omega} f(x), \qquad \Omega=[\ell,u]=\{x:\ \ell \le x \le u\}.
\end{equation}}

{\setlength{\textfloatsep}{2pt}
\setlength{\floatsep}{2pt}
\setlength{\intextsep}{2pt}
\setlength{\abovecaptionskip}{2pt}
\setlength{\belowcaptionskip}{0pt}
\begin{algorithm}[t]
\caption{ZOAF: Multi-start Projected Zeroth-Order Circuit Sizing (coarse RGE $\rightarrow$ fine CGE)}
\label{alg:mspzo}
\DontPrintSemicolon
\SetAlgoVlined
\SetKwInOut{Input}{Input}\SetKwInOut{Output}{Output}
\Input{$f$; bounds $[\ell,u]\subset\mathbb{R}^d$; budget $B$; base $(\eta_0,\mu_0)$; window $W$, tol $\delta$; factors $\gamma_\eta\!\in\!(0,1)$, $\gamma_\mu\!>\!1$; starts $M$; RGE directions $N$; init.\ dist.\ $p_{\text{init}}$.}
\Output{$x^\star$}
\BlankLine
\textbf{Init:} sample $\{x^{(m)}\}_{m=1}^{M}\!\sim\!p_{\text{init}}$; evaluate $f(x^{(m)})$; 
set $(x^\star,f^\star)\!\leftarrow\!\arg\max_m f(x^{(m)})$; 
$\mathrm{evals}\!\leftarrow\!M$; build pool $\mathcal{P}$ from $\{x^{(m)}\}$\;
\While{$\mathrm{evals}<B$}{
  select start $x$ from $\mathcal{P}$ via rank-based schedule\;
  $\eta\!\leftarrow\!\eta_0$; $\mu\!\leftarrow\!\mu_0$; phase $\leftarrow$ \textbf{RGE}; reset window stats\;
  \While{$\mathrm{evals}<B$}{
    \tcp{ZO gradient (Eqs.~\eqref{eq:zo_rge} and \eqref{eq:zo_cge})}
    \uIf{phase $=$ \textbf{RGE}}{
      compute $\hat g$ via RGE with $N$ directions; $\mathrm{evals} \mathrel{+}= 2N$\;
    }
    \Else{
      compute $\hat g$ via CGE over all coordinates; $\mathrm{evals} \mathrel{+}= 2d$\;
    }
    $x^{\mathrm{cand}}\!\leftarrow\!\operatorname{clip}(x+\eta\hat g,\,\ell,\,u)$\;
    $y^{\mathrm{cand}}\!\leftarrow\!f(x^{\mathrm{cand}})$; $\mathrm{evals}\!+\!+$\;
    \If{$y^{\mathrm{cand}}>f(x)$}{
      $x\!\leftarrow\!x^{\mathrm{cand}}$; 
      \If{$y^{\mathrm{cand}}>f^\star$}{$x^\star\!\leftarrow\!x$; $f^\star\!\leftarrow\!y^{\mathrm{cand}}$}
    }
    update window stats\;
    \If{$\overline{\Delta}_t\le\delta$}{
      $\eta\!\leftarrow\!\gamma_\eta\eta$; $\mu\!\leftarrow\!\gamma_\mu\mu$\;
    }
    \If{switch-condition \& budget-guard (see Sec.~\ref{subsec:window})}{
      phase $\leftarrow$ \textbf{CGE}\;
    }
    \If{early-convergence (see Sec.~\ref{subsec:window})}{
      \textbf{break}\;
    }
  }
}
\Return $x^\star$\;
\end{algorithm}
\vspace{-6pt}
}

Algorithm~\ref{alg:mspzo} is stated in maximization form, using the greedy-acceptance condition $y^{\mathrm{cand}}>f(x)$; for objectives that are minimized (ripple, peaking, $|S_{22}|$), we negate $f$ before passing it to the algorithm, so all objectives are handled uniformly.
The convergence analysis in the Appendix uses the equivalent minimization form $F(x)=-f(x)$.

To tackle this problem, we propose \textbf{ZOAF}, a two-phase, multi-start projected ZO optimizer for analog/RF sizing under a fixed simulation budget.
In this section, we detail the key components of ZOAF: Sec.~\ref{subsec:zo_hybrid} introduces the hybrid ZO direction scheme (ZO-RGE: random-direction gradient estimation, and ZO-CGE: coordinate-wise gradient estimation), Sec.~\ref{subsec:clip} explains the projected updates for handling box constraints, Sec.~\ref{subsec:init} presents the one-shot multi-start initialization, and Sec.~\ref{subsec:window} describes the sliding-window adaptation and restart logic; Algorithm~\ref{alg:mspzo} summarizes the full algorithm flow.
An intuitive overview of the multi-start workflow and optimization trajectories is shown in Fig.~\ref{fig:optimization_diagram}.

\subsection{ZO Gradient Estimation \& Hybrid Schedule}\label{subsec:zo_hybrid}
Zeroth-order (ZO) methods perform gradient-based optimization using only function evaluations~\cite{liu2020primer,chen2017zoo,flaxman2004online,nesterov2017random}. 
In our setting, each query runs a circuit simulator at parameter vector $x$ (device sizes, biases, $R$/$C$ values, etc.) to obtain the objective (e.g., gain, bandwidth, power).
At iteration $t$, we estimate a descent direction $\hat g_t$ from symmetric two-point probes and update $x_{t+1}=x_t-\eta_t\hat g_t$ with stepsize $\eta_t$ and a small perturbation radius $\mu>0$ that controls the finite-difference smoothing.

\noindent\textit{ZO-RGE: random-direction gradient descent} (cost $2N$ evaluations per update).
We draw $N$ i.i.d.\ random directions $\{u_{t,k}\}_{k=1}^N$ from a zero-mean isotropic distribution with
\[
\mathbb{E}[u_{t,k}]=0,\qquad \mathbb{E}[u_{t,k}u_{t,k}^\top]=I_d.
\]
In practice we use either Rademacher vectors with independent coordinates $u^{(i)}\in\{-1,+1\}$ (so $\mathbb{E}[u^{(i)}]=0$ and $\mathrm{Var}(u^{(i)})=1$), or unit-sphere directions obtained as $u = \sqrt{d}\,g/\|g\|$ with $g\sim\mathcal{N}(0,I_d)$, which also satisfy the above moment conditions.
Given these directions, we apply two-sided perturbations and average the resulting estimates~\cite{liu2019signsgd}:
\begin{equation}
\label{eq:zo_rge}
\hat g_t = \frac{1}{2\mu N}\sum_{k=1}^{N}
\bigl(f(x_t+\mu u_{t,k}) - f(x_t-\mu u_{t,k})\bigr)\,u_{t,k},
\end{equation}

Here $\mu$ is chosen as a small fraction of the box width (so that perturbations stay in a local neighborhood while averaging out small numerical irregularities), and $N$ trades off estimator variance against query cost: larger $N$ reduces randomness in $\hat g_t$ but costs $2N$ simulator calls per iteration, so in practice we keep $N$ in the low single digits for expensive analog/RF circuits.

\noindent\textit{ZO-CGE: coordinate gradient descent} (cost $2d$ evaluations per update).
Alternatively, we perturb each coordinate axis $e_i$ symmetrically to estimate all partial derivatives and then perform a single vector update:
\begin{equation}
\label{eq:zo_cge}
\hat g_t[i] = \frac{f(x_t+\mu e_i) - f(x_t-\mu e_i)}{2\mu}, \quad i=1,\ldots,d.
\end{equation}

This estimator uses the same perturbation radius $\mu$ but requires $2d$ simulator calls per step, yielding a low-variance, coordinate-wise ZO gradient that is well suited for late-stage local refinement when $d$ is moderate.

\noindent\textbf{Hybrid schedule: coarse RGE $\rightarrow$ fine CGE.}
RGE has low query cost and scales weakly with dimension, making it well suited for fast exploration, whereas CGE is more expensive but lower-variance and accurate for local refinement. ZOAF therefore runs RGE to reach a promising basin and, based on a sliding-window monitor, either switches to CGE when progress slows but has not stalled, or terminates the stage and restarts from the shared quasi-random pool when improvement collapses. A small budget guard prevents switching too late, and this coarse-to-fine handoff consistently improves early convergence and final quality under the same simulator budget~\cite{zhao2023tensor}.

\noindent\textbf{Scheduling rule.}
Each stage starts in RGE. We run a sliding-window test on recent improvements, gradient norms, and remaining budget. If the switch condition is met, the phase flips to CGE; if the early-convergence condition is met instead, the stage stops and a new start is drawn from the shared pool. In practice, the RGE$\rightarrow$CGE switch can also be user-specified or tied to a simple budget threshold, while reusing the same early-convergence rule.

\subsection{Theoretical Insight on Hybrid Scheduling}
The strategic transition from RGE to CGE within our hybrid schedule is rigorously motivated by the statistical properties of the underlying ZO estimators. The Appendix (Section~\ref{app:zo_rge_proofs}) formally establishes these characteristics for the two-point ZO-RGE estimator defined in \eqref{eq:rge_est}. Specifically, Proposition~\ref{prop:unbiased} proves that the RGE estimator is strictly unbiased with respect to the Gaussian-smoothed gradient $\nabla f_{\mu}(x)$. Furthermore, Proposition~\ref{prop:smooth_bias} bounds the deterministic smoothing bias by $\mathcal{O}(L\mu\sqrt{d})$, demonstrating that a carefully chosen perturbation radius $\mu$ effectively controls the deviation from the true gradient of an $L$-smooth objective. 

However, the critical limitation of RGE emerges from its variance. Proposition~\ref{prop:var} demonstrates that the mean-square error of the RGE estimator scales as $\mathcal{O}(d^2/N)$ for a $G$-Lipschitz function. This theoretical bound perfectly explains the operational dynamics of our framework: the RGE update in \eqref{eq:zo_rge} is highly effective for initial budget-efficient exploration---requiring only $2N \ll 2d$ queries to find a viable descent direction---but becomes severely variance-limited in higher-dimensional circuit sizing (e.g., $d \approx 20$). As the search approaches a local optimum and the true gradient magnitude shrinks, the $\mathcal{O}(d^2/N)$ sampling noise drowns out the gradient signal, stalling convergence.

Conversely, the ZO-CGE estimator in \eqref{eq:zo_cge} circumvents this directional sampling variance entirely by systematically probing all $d$ orthogonal coordinate bases. In light of the variance bounds established in Proposition~\ref{prop:var}, transitioning to CGE incurs a fixed cost of $2d$ evaluations per step but recovers the deterministic, coordinate-wise fidelity necessary for late-stage convergence. This precision is indispensable for resolving fine-grained performance trade-offs, such as deepening the return loss $|S_{22}|$ in RF matching networks or suppressing passband ripple in cascaded amplifiers. Consequently, our hybrid schedule deploys RGE for rapid, budget-efficient basin discovery, then dynamically switches to CGE for high-precision local refinement, thereby maximizing the utility of every expensive simulator call.

\subsection{Constraint Handling \& Acceptance} \label{subsec:clip}

Given the ZO estimators above, we enforce box constraints $[\ell,u]$ by clipping both probes and updates. 
This avoids infeasible simulator calls and keeps all evaluations within bounds.

\paragraph{Clipped probes (shared).}
For any probe direction $v$ (a random $u_{t,k}$ in RGE or a basis vector $e_i$ in CGE), we form symmetric two-point queries inside the box:
\begin{equation}\label{eq:clip-probes}
x^{\pm} \;=\; \operatorname{clip}\bigl(x_t \pm \mu_t v,\, \ell,\, u\bigr),
\qquad
v \in \{u_{t,k}\}\ \text{or}\ v=e_i.
\end{equation}

\paragraph{Projected greedy update.}
After obtaining $\hat g_t$ from RGE/CGE, we take a clipped ascent step and accept it only if it improves the objective:
\begin{subequations}\label{eq:clip-update}
\begin{align}
x_{\mathrm{cand}} &= \operatorname{clip}\bigl(x_t + \eta_t \hat g_t,\, \ell,\, u\bigr), \label{eq:clip-update-cand}\\
x_{t+1} &=
\begin{cases}
x_{\mathrm{cand}}, & f(x_{\mathrm{cand}}) > f(x_t),\\
x_t, & \text{otherwise},
\end{cases} \label{eq:clip-update-next}\\
f^\star_{t+1} &= \max\{f^\star_t,\, f(x_{t+1})\}. \label{eq:clip-update-best}
\end{align}
\end{subequations}
This greedy rule concentrates evaluations on improving designs while guaranteeing feasibility by construction.

\subsection{Initial Sampling \& Multi-start}
\label{subsec:init}
ZO can start from any seed, but a single run often falls into a local minimum and converges quickly to that basin. Under tight simulator-call budgets, using several short starts from different regions raises the chance of reaching a better (near-global) solution.

We therefore build a one-shot pool at the start and reuse it for all restarts. We draw $M$ points $x^{(m)}\!\sim\!p_{\text{init}}$ on $[\ell,u]$ (Sobol/LHS/uniform, or a simple combination), evaluate $f(x^{(m)})$, and store
\begin{equation*}
\mathcal{P}=\{(x^{(m)},\,f(x^{(m)}))\}_{m=1}^{M}.
\end{equation*}
No new samples are created during restarts.
The $M$ initial evaluations are counted toward the total simulator-call budget $B$, so the pool construction cost is fully reflected in all reported evaluation counts.

At stage boundaries, we rank all pool points by their objective value $f(x^{(m)})$ (higher is better) and pick the next start by a rank-based rule: early stages prefer top-ranked designs (exploitation), and later stages gradually allow lower ranks (exploration).

\subsection{Sliding Window \& Early Convergence}\label{subsec:window}
To save simulator calls and avoid lingering in a poor basin, we track short-term progress with a sliding window and stop a stage when progress clearly stalls.

We maintain statistics over the last $W$ steps (e.g., $W{=}8$). Let
\begin{subequations}
\begin{align}
\mathcal{I}_t &= \{\Delta_j\}_{j=t-W+1}^{t}\!,\ \ \Delta_j=f^\star_j-f^\star_{j-1}, \label{eq:window_I}\\
\mathcal{G}_t &= \{\|\hat g_j\|_2\}_{j=t-W+1}^{t}. \label{eq:window_G}
\end{align}
\end{subequations}
and define
\begin{subequations}
\begin{align}
\overline{\Delta}_t &= \mathrm{mean}(\mathcal{I}_t), \label{eq:window_dbar}\\
s_{\Delta,t} &= \mathrm{std}(\mathcal{I}_t), \label{eq:window_s}\\
\bar{g}_t &= \mathrm{mean}(\mathcal{G}_t). \label{eq:window_gbar}
\end{align}
\end{subequations}
with $m_t$ the least-squares slope of $\{(j,\Delta_j)\}$ over the window.

\emph{Early convergence} is declared when at least two of the following hold:
\begin{subequations}
\begin{align}
\overline{\Delta}_t &\le \tau_{\Delta}, \label{eq:window_cond1}\\
\bar{g}_t &\le \tau_{g}, \label{eq:window_cond2}\\
\lvert m_t \rvert &\le \tau_{m}, \label{eq:window_cond3}\\
s_{\Delta,t} &\le \tau_{s}. \label{eq:window_cond4}
\end{align}
\end{subequations}
We fix $W{=}8$ and use the same thresholds for all circuits:
$\tau_{\Delta}=10^{-4}$ (avg\_improvement),
$\tau_{g}=10^{-2}$ (avg\_grad\_norm),
$\tau_{m}=10^{-5}$ (abs\_trend\_slope),
and $\tau_{s}=10^{-6}$ (std\_improvement).

When triggered, the stage terminates and a restart is launched from the shared pool $\mathcal{P}$ (Sec.~\ref{subsec:init}).

If recent improvement is small but not yet stagnant (e.g., $\overline{\Delta}_t \le \delta$ only), we apply the soft-stalling monitor in Algorithm~\ref{alg:mspzo}: $\overline{\Delta}_t$ is computed as the mean of the recent window $\mathcal{I}_t=\{\Delta_j\}_{j=t-W+1}^{t}$ (Eq.~\eqref{eq:window_I}) and compared with $\delta$; if triggered, the step size and perturbation radius are updated as $\eta\leftarrow\gamma_\eta\eta$ and $\mu\leftarrow\gamma_\mu\mu$. This reduces overshooting and broadens probing to recover a usable signal without wasting budget in the current basin.

\vspace{-6pt}

\section{Experiments}\label{sec:experiments}

\noindent\textbf{Experimental setup.}
We evaluate ZOAF on three analog/RF circuits:
(i) an RF matching network (Fig.~\ref{fig:rf_match}),
(ii) a three-stage operational amplifier (Fig.~\ref{fig:opamp3}(a)), and
(iii) a two-stage cascaded signal-conditioning amplifier with 22 design variables (Fig.~\ref{fig:opamp3}(b)).
The RF case is simulated in ADS~2021 via netlist edits, while the amplifiers use PySpice~1.5.
All methods share the same simulator-call budget---including each method's initialization phase (e.g., the initial design-of-experiments for GP-based methods and the initial population for EA-based methods)---and run on a 32-core Xeon workstation.
For all stochastic methods, hyperparameters follow their public implementations. Statistical reporting differs by benchmark: on the RF matching and three-stage op-amp tasks we report 10-run statistics (best, mean $\pm$ standard deviation), whereas on the 22-parameter two-stage cascaded amplifier we report 100-seed distributional statistics (min, 10th percentile, median, mean, 90th percentile, max) so that the larger seed sample can resolve typical-case versus adversarial-seed behavior.
We benchmark ZOAF against representative Bayesian and evolutionary baselines, including GP-based Bayesian optimization (EI, PI, LCB, and Mixer-AF), DE, GASPAD, TuRBO-1, CMA-ES, and PSO.
Gradient-based methods (e.g., adjoint-based optimizers) are not included as baselines because they require access to simulator source code to implement sensitivity computation---an assumption that is unavailable with commercial tools such as ADS and PySpice and that constitutes the primary motivation for the black-box setting adopted in this work.
Results are reported in terms of total simulator calls, convergence speed, and achieved figure of merit (FOM).
The ``Conv.'' column in all tables counts the simulator call at which each method first reaches its reported best FOM; one epoch corresponds to exactly one simulator call for every method, so the comparison is strictly fair.
These three benchmarks were selected because their FOMs are real-analytic functions of the passive component values within the feasible box---guaranteeing that the $L$-smoothness assumption (Assumption~2) holds with a well-controlled Lipschitz constant and that the objectives are bounded (Assumption~5), which directly tightens all three terms in the convergence bound~\eqref{eq:convergence_bound}.
Their moderate dimensions ($d\le 22$) place them in a regime where RGE's linear-in-$d$ query cost enables efficient early exploration and CGE's deterministic finite-difference sweep remains tractable, matching the algorithmic intent of the hybrid schedule.

% ===================== 4.1 RF MATCHING =====================
\subsection{RF Matching Network}

\noindent We first examine a passive RF matching network (Fig.~\ref{fig:rf_match}) terminated with 50\,$\Omega$ loads on a fixed microstrip substrate. 
Fifteen lumped elements are tuned, and the objective is to minimize the return loss $|S_{22}|$ at the target frequency of 94\,GHz. 
The circuit is treated as a black box, and all methods share the same fixed simulator-call budget for fair comparison.

\noindent\textbf{Result comparison.}
Under a 150-call budget and identical initialization (Table~\ref{tab:rf_s22}), ZOAF achieves the deepest single-run return loss of \(|S_{22}| = -66.46\)\,dB.
We note that improvements below $-40$\,dB carry diminishing engineering returns: measurement uncertainties, fabrication tolerances, and the near-unity power-transmission factor ($1-|S_{22}|^2\approx 1$ below $-30$\,dB) collectively make this region practically saturated.
The more meaningful comparison therefore lies in \emph{reliability across runs}.
In the Mean\,$\pm$\,Std column, GP-EI, GP-PI, and GP-LCB average only $-22.52$, $-26.44$, and $-32.67$\,dB respectively---failing to reach the practical $-30$\,dB threshold on average---while GP-Mixer-AF barely clears $-40$\,dB at $-40.11$\,dB mean.
ZOAF, by contrast, reaches $-57.69 \pm 6.37$\,dB on average, reliably exceeding the practical matching threshold in every independent run with a tighter spread than GP-Mixer-AF ($\pm 6.68$\,dB).
These results demonstrate that ZOAF’s coarse-to-fine schedule consistently converts the 150-call budget into well-matched designs on this 15-parameter task, where most baselines fail to do so reliably.

\begin{figure*}[t]
\centering
\begin{minipage}[c]{0.48\textwidth}
\centering
\begingroup
\setlength{\tabcolsep}{2pt}
\renewcommand{\arraystretch}{0.97}
\setlength{\belowcaptionskip}{6pt}
\footnotesize
\captionof{table}{$|S_{22}|$ (dB) Comparison on the RF Matching Network.}
\label{tab:rf_s22}
\begin{tabular}{lcccc}
\toprule
\textbf{Category} & \textbf{Method} & \textbf{Evals} & \makecell{\textbf{Best $S_{22}$}\\\textbf{(dB)}} & \makecell{\textbf{Mean $\pm$ Std of}\\\textbf{$S_{22}$ (dB)}} \\
\midrule
\multirow{4}{*}{GP}
 & GP-EI       & 150 & -25.47 & $-22.52 \pm 2.31$ \\
 & GP-PI       & 150 & -30.34 & $-26.44 \pm 3.38$ \\
 & GP-LCB      & 150 & -36.69 & $-32.67 \pm 2.89$ \\
 & GP-Mixer-AF & 150 & -48.20 & $-40.11 \pm 6.68$ \\
\midrule
\multirow{2}{*}{EA}
 & DE          & 150 & -33.42 & $-30.29 \pm 2.60$ \\
 & GASPAD      & 150 & -50.31 & $-45.75 \pm 3.12$ \\
\midrule
\multirow{1}{*}{ZO}
 & \textbf{ZOAF} & 150 & \textbf{-66.46} & $\mathbf{-57.69 \pm 6.37}$ \\
\bottomrule
\end{tabular}
\endgroup
\end{minipage}
\hfill
\begin{minipage}[c]{0.48\textwidth}
\centering
\vspace{0pt}
\includegraphics[width=0.9\textwidth]{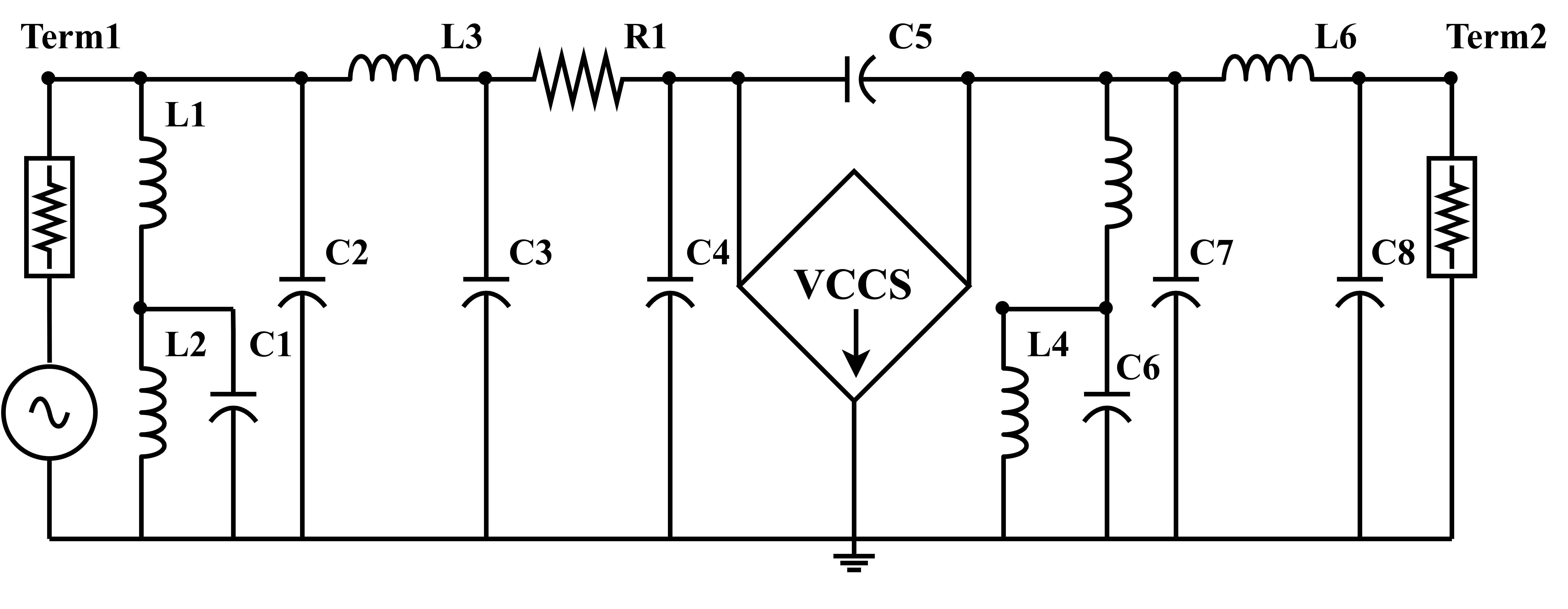}
\vspace{-1mm}
\captionof{figure}{Schematics for RF matching network.}
\label{fig:rf_match}
\end{minipage}
\end{figure*}

\begin{figure*}[!t]
  \centering
  \includegraphics[width=\textwidth]{Figs/circuit2_3.png}  
  \caption{Schematics for (a) Three-stage op-amp. (b) Two-stage cascaded signal-conditioning amplifier.} 
  \label{fig:opamp3}
\end{figure*}

% \vspace{-6pt}

% ===================== 4.2 THREE-STAGE OP-AMP =====================

\subsection{Three-Stage Operational Amplifier}

\begin{table*}[t]
\centering
\begingroup
\footnotesize
\setlength{\belowcaptionskip}{6pt}
\caption{Three-Stage Op-Amp: Best FOMs With Runtime, Convergence Epoch (Conv.), and Evaluation Counts (Evals) Under a 300-Call Simulator Budget.}
\label{tab:opamp3_results}
\begin{tabular}{lcccc cccc cccc}
\toprule
\multirow{2}{*}{\textbf{Method}} 
  & \multicolumn{4}{c}{\textbf{Gain (–)}} 
  & \multicolumn{4}{c}{\textbf{GBP (MHz)}} 
  & \multicolumn{4}{c}{\textbf{Power efficiency (1/W)}} \\
\cmidrule(lr){2-5} \cmidrule(lr){6-9} \cmidrule(lr){10-13}
 & \textbf{Value} & \textbf{Time (s)} & \textbf{Conv.} & \textbf{Evals}
 & \textbf{Value} & \textbf{Time (s)} & \textbf{Conv.} & \textbf{Evals}
 & \textbf{Value} & \textbf{Time (s)} & \textbf{Conv.} & \textbf{Evals} \\
\midrule
\textbf{ZOAF}
  & \textbf{$9.73\times 10^{8}$}  & 1.1    & 72  & 290
  & \textbf{$1.01\times 10^{7}$}  & 2.9    & 68  & 290
  & $9.33\times 10^{2}$           & 0.7    & 290  & 290 \\
CMA-ES
  & \textbf{$9.73\times 10^{8}$}  & 1.0    & 200 & 300
  & $5.66\times 10^{4}$           & 2.6    & 280 & 300
  & $9.05\times 10^{2}$           & 0.8    & 140 & 300 \\
TuRBO-1
  & $1.48\times 10^{8}$           & 1385.9 & 165 & 300
  & $1.38\times 10^{4}$           & 1389.3 & 40  & 300
  & \textbf{$9.40\times 10^{2}$}  & 1186.8 & 290 & 300 \\
GP-EI
  & $6.29\times 10^{4}$           & 2326.2 & 100 & 300
  & $1.36\times 10^{4}$           & 2250.0 & 105 & 300
  & $8.27\times 10^{2}$          & 2029.7 & 65  & 300 \\
GP-PI
  & $8.62\times 10^{5}$           & 2261.7 & 85  & 300
  & $1.98\times 10^{2}$           & 2624.7 & 120 & 300
  & $8.04\times 10^{2}$           & 2115.2 & 180 & 300 \\
AutoCkt
  & \textbf{$2.38\times 10^{7}$}  & 2.4    & 150 & 300
  & $7.68\times 10^{5}$           & 5.9    & 300 & 300
  & $8.25\times 10^{2}$           & 2.1    & 250 & 300 \\
DNN-Opt
  & \textbf{$5.21\times 10^{7}$}  & 2.5    & 300 & 300
  & $5.42\times 10^{6}$           & 6.1    & 300 & 300
  & $8.84\times 10^{2}$           & 2.4    & 300 & 300 \\
\bottomrule
\end{tabular}
\endgroup
\end{table*}

\noindent We next evaluate a cascaded three-stage non-inverting amplifier (Fig.~\ref{fig:opamp3}(a)). 
The decision vector contains ten components,
$R_{1\ldots 8}\!\in[0.1,100]$\,k$\Omega$ and $C_{1,2}\!\in[0.1,100]$\,nF, 
and each op-amp is modeled as a high-gain macromodel.
All methods optimize the same single-objective FOMs under a fixed simulator-call budget.

\noindent\textbf{Optimization setup.} Let $A_{\mathrm{DC}i}$ denote the DC gain of the $i$-th stage and $f_{3\mathrm{dB}}(x)$ the closed-loop $-3$\,dB bandwidth. We consider three single-objective FOMs---DC gain, gain--bandwidth product, and power efficiency---all maximized over $x$:
\begingroup
\setlength{\abovedisplayskip}{2pt}
\setlength{\belowdisplayskip}{4pt}
\setlength{\abovedisplayshortskip}{2pt}
\setlength{\belowdisplayshortskip}{4pt}
\begin{subequations}\label{eq:obj_3stage}
\begin{align}
A_{\mathrm{DC}}(x)    &= A_{\mathrm{DC1}}\,A_{\mathrm{DC2}}\,A_{\mathrm{DC3}}, \label{eq:obj_gain}\\
\mathrm{GBP}(x)       &= A_{\mathrm{DC}}(x)\,f_{3\mathrm{dB}}(x), \label{eq:obj_gbw}\\
P_{\mathrm{total}}(x) &= V_{\mathrm{CC}}\!\left(I_{q1}+I_{q2}+I_{q3}\right)+3\,P_{\mathrm{opamp}}, \label{eq:obj_eff_ptotal}\\
\eta(x)               &= \frac{A_{\mathrm{DC}}(x)}{P_{\mathrm{total}}(x)}. \label{eq:obj_eff}
\end{align}
\end{subequations}
\endgroup

% \vspace{-6pt}

\noindent\textbf{Result comparison.}
For this 300-call setting, all methods use the same target simulator-call budget while preserving their native update granularity; therefore, minor differences in realized calls (e.g., 290 vs. 300) can occur and are reported explicitly in Table~\ref{tab:opamp3_results}.
For clarity, bold entries in Table~\ref{tab:opamp3_results} indicate the best value in each metric column (higher is better for gain, GBP, and power efficiency).
Table~\ref{tab:opamp3_results} shows that ZOAF matches the best DC gain ($9.73 \times 10^{8}$) while requiring far fewer epochs (72 vs.\ 200 for CMA\mbox{-}ES) at comparable runtime.
For GBP, ZOAF attains $1.01 \times 10^{7}$\,MHz, corresponding to approximately $180\times$ over CMA\mbox{-}ES ($5.66 \times 10^{4}$\,MHz) and $740\times$ over GP\mbox{-}EI ($1.36 \times 10^{4}$\,MHz), while also converging sooner (68 vs.\ 280 epochs for CMA\mbox{-}ES).
On power efficiency, under the shared 300-call budget TuRBO\mbox{-}1 is slightly ahead ($940$ vs.\ $933$ for ZOAF), but at $\sim 10^{3}$\,s runtime compared with 0.7\,s for ZOAF and 0.8\,s for CMA\mbox{-}ES.
With a modest extension to 396 calls, ZOAF reaches $978$ in 0.9\,s, the best power efficiency among all methods.
Overall, ZOAF offers the best accuracy–efficiency balance: it ties the top gain, strongly dominates GBP, and remains competitive or better on power efficiency while converging substantially faster. Additional statistical-significance and uncertainty results (e.g., mean/std across independent runs for gain, GBP, and power efficiency) are provided in Appendix Table~\ref{tab:gain_gbp_power}.

% \vspace{-8pt}

% ===================== 4.3 TWO-STAGE (22-PARAM) =====================
\subsection{Two-Stage Cascaded Signal-Conditioning Amplifier}

The circuit (Fig.~\ref{fig:opamp3}(b)) comprises two cascaded voltage-gain stages with feedback and compensation networks that jointly shape in-band flatness and high-frequency roll-off.
The design vector contains $22$ tunable components (stage feedbacks, compensation capacitors, interstage filter elements, and bias settings) under box bounds.
We consider \emph{three} single-objective metrics over a target passband $\mathcal{B}=[\omega_L,\omega_H]$. Let $H(j\omega)$ be the small-signal transfer function and $A_{\mathrm{DC}}(x)=|H(0)|$.

\noindent\textbf{Optimization setup.} Let $L(\omega)\triangleq 20\log_{10}|H(j\omega)|$ denote the magnitude response in dB; recall $A_{\mathrm{DC}}(x)=|H(0)|$. We consider three single-objective FOMs---peaking, ripple, and overshoot---all minimized over $x$:
\begingroup
\setlength{\abovedisplayskip}{2pt}
\setlength{\belowdisplayskip}{4pt}
\setlength{\abovedisplayshortskip}{2pt}
\setlength{\belowdisplayshortskip}{4pt}
\begin{subequations}\label{eq:obj_metrics}
\begin{align}
\mathrm{Peak}(x)      &= \max_{\omega\in\mathcal{B}}\bigl[L(\omega)-L(0)\bigr], \label{eq:obj_peak}\\
\mathrm{Ripple}(x)    &= \max_{\omega\in\mathcal{B}} L(\omega) - \min_{\omega\in\mathcal{B}} L(\omega), \label{eq:obj_ripple}\\
r(x)                  &= \frac{\max_{\omega\in\mathcal{B}}|H(j\omega)|}{|H(j2\pi f_0)|}, \label{eq:obj_os_r}\\
\zeta(x)              &= \sqrt{\bigl(1-\sqrt{1-r(x)^{-2}}\bigr)/2}, \label{eq:obj_os_zeta}\\
\mathrm{Overshoot}(x) &= 100\,\exp\!\left(\!-\frac{\pi\,\zeta(x)}{\sqrt{1-\zeta(x)^2}}\!\right)\mathbf{1}_{\{r(x)>1\}}. \label{eq:obj_os}
\end{align}
\end{subequations}
\endgroup

\begin{figure*}[!t]
\centering
\begin{minipage}[t]{0.49\textwidth}
\centering
\includegraphics[width=\textwidth]{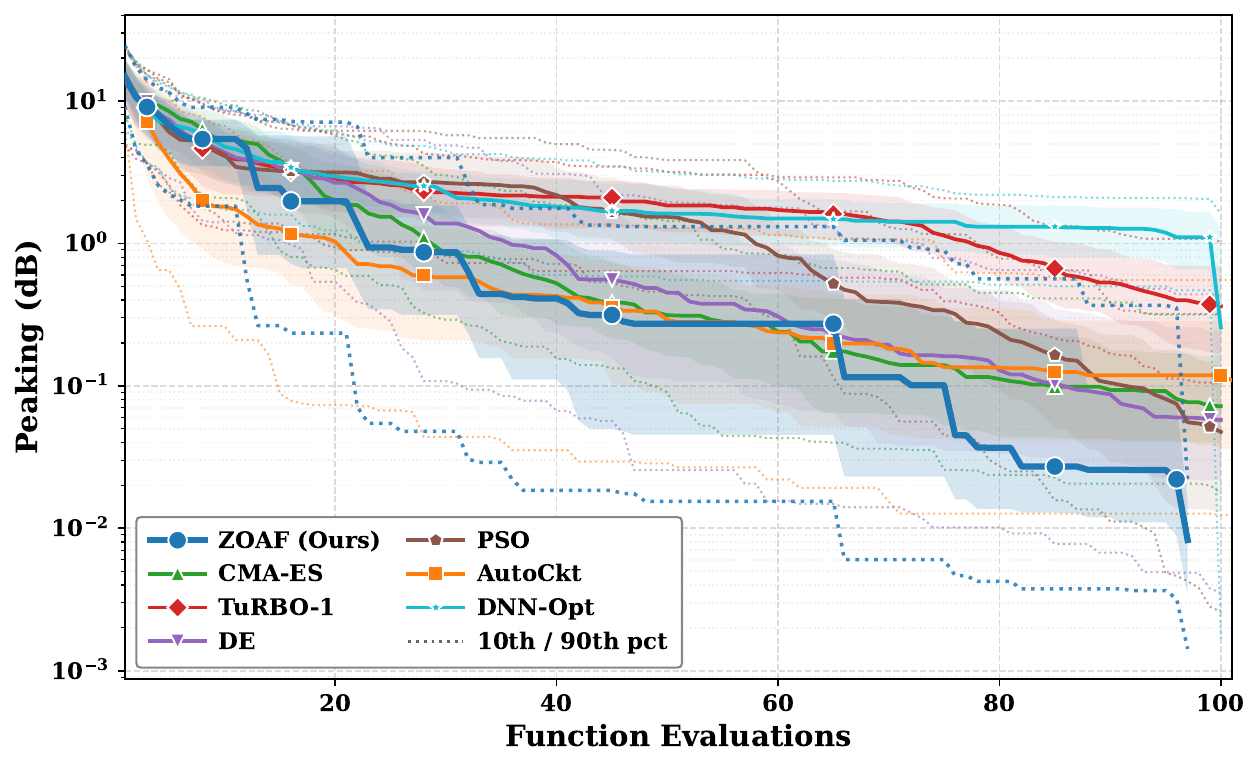}
% \vspace{1pt}
{\footnotesize\rmfamily (a) peaking}
\end{minipage}\hfill
\begin{minipage}[t]{0.49\textwidth}
\centering
\includegraphics[width=\textwidth]{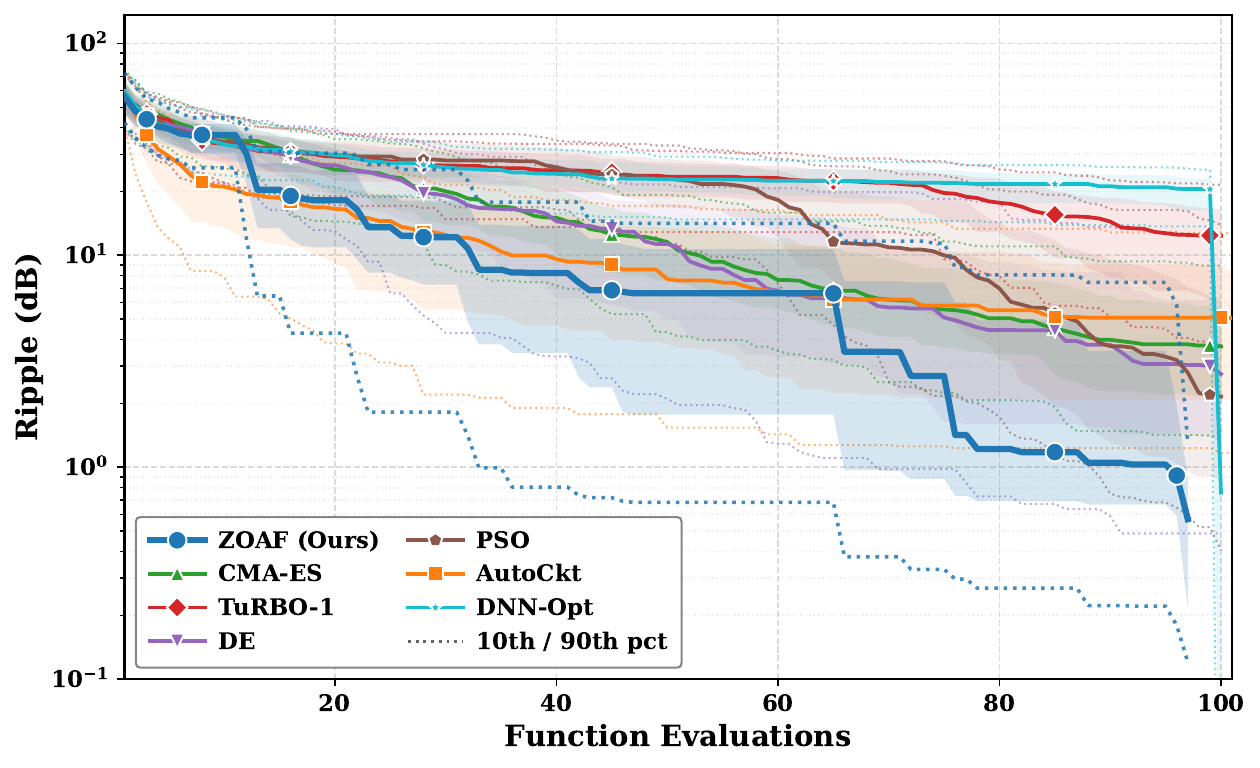}
% \vspace{1pt}
{\footnotesize\rmfamily (b) ripple}
\end{minipage}
\vspace{4pt}
\caption{Best-so-far convergence on the 22-parameter two-stage cascaded amplifier under a 100 simulator-call budget, aggregated over 100 random seeds: (a) peaking and (b) ripple. Solid lines: median across seeds; shaded bands: inter-quartile range (25th--75th percentile); dotted lines: 10th and 90th percentiles. ZOAF is shown in blue with a bolder line; the $y$-axis is logarithmic; lower is better.}
\label{fig:peaking_ripple_curves}

\vspace{1pt}
\end{figure*}

% --- Two tables side-by-side under the image ---
\begin{table*}[!t]
\centering
\begin{minipage}[t]{0.49\textwidth}
\centering
\footnotesize
\setlength{\belowcaptionskip}{4pt}
\captionof{table}{Statistical performance on the Peaking FOM over 100 independent seeds (100-simulator-call budget per run, in dB; lower is better; best in bold).}
\label{tab:peaking}
\resizebox{\linewidth}{!}{%
\begin{tabular}{lcccccc}
\toprule
\textbf{FOM: Peaking} & \textbf{Min} & \textbf{10th pct} & \textbf{Median} & \textbf{Mean} & \textbf{90th pct} & \textbf{Max} \\
\midrule
\textbf{ZOAF}    & \textbf{0.0000}$^{\dagger}$ & \textbf{0.0014} & \textbf{0.0083} & \textbf{0.0107} & \textbf{0.0214} & \textbf{0.0616} \\
PSO              & 0.0000$^{\ddagger}$         & 0.0026          & 0.0475          & 0.1314          & 0.3196          & 1.4021 \\
DE               & $3.115\!\times\!10^{-8}$    & 0.0035          & 0.0575          & 0.1660          & 0.4295          & 1.2765 \\
CMA-ES           & 0.0009                      & 0.0178          & 0.0719          & 0.1375          & 0.3163          & 0.9768 \\
AutoCkt          & $4.560\!\times\!10^{-6}$    & 0.0124          & 0.1094          & 0.2173          & 0.5512          & 1.8744 \\
DNN-Opt          & $3.087\!\times\!10^{-6}$    & 0.0017          & 0.2579          & 0.5634          & 1.6125          & 1.9566 \\
TuRBO-1          & 0.0003                      & 0.1054          & 0.3616          & 0.4580          & 1.0237          & 1.4990 \\
\bottomrule
\end{tabular}
}
\\[2pt]
{\scriptsize $^{\dagger}$One ZOAF seed hit the metric floor (floating-point $0.0$); next-smallest is $5.881\times 10^{-5}$ dB. $^{\ddagger}$Same for one PSO seed; next-smallest is $1.375\times 10^{-4}$ dB.}
\end{minipage}
\hfill
\begin{minipage}[t]{0.49\textwidth}
\centering
\footnotesize
\setlength{\belowcaptionskip}{4pt}
\captionof{table}{Statistical performance on the Ripple FOM over 100 independent seeds (100-simulator-call budget per run, in dB; lower is better; best in bold).}
\label{tab:ripple}
\resizebox{\linewidth}{!}{%
\begin{tabular}{lcccccc}
\toprule
\textbf{FOM: Ripple} & \textbf{Min} & \textbf{10th pct} & \textbf{Median} & \textbf{Mean} & \textbf{90th pct} & \textbf{Max} \\
\midrule
\textbf{ZOAF}    & 0.019          & 0.120          & \textbf{0.569} & 0.731          & \textbf{1.305} & 6.698  \\
DNN-Opt          & \textbf{0.0001}& \textbf{0.0007}& 0.759          & \textbf{0.569} & 1.457          & \textbf{3.884} \\
PSO              & 0.059          & 0.405          & 2.154          & 4.416          & 13.924         & 21.946 \\
DE               & 0.042          & 0.486          & 2.753          & 4.686          & 12.462         & 22.420 \\
CMA-ES           & 0.623          & 1.396          & 3.719          & 4.428          & 8.941          & 13.795 \\
AutoCkt          & 0.171          & 1.167          & 5.044          & 5.896          & 12.713         & 20.768 \\
TuRBO-1          & 1.269          & 3.761          & 12.369         & 11.874         & 21.524         & 27.886 \\
\bottomrule
\end{tabular}
}
\end{minipage}
\end{table*}
\vspace{6pt}

\begin{figure*}[!t]
\centering
\begin{minipage}[t]{0.49\textwidth}
\centering
\captionof{figure}{Best-so-far overshoot convergence on the 22-parameter amplifier.}
\label{fig:overshoot_curves}
\includegraphics[width=\textwidth]{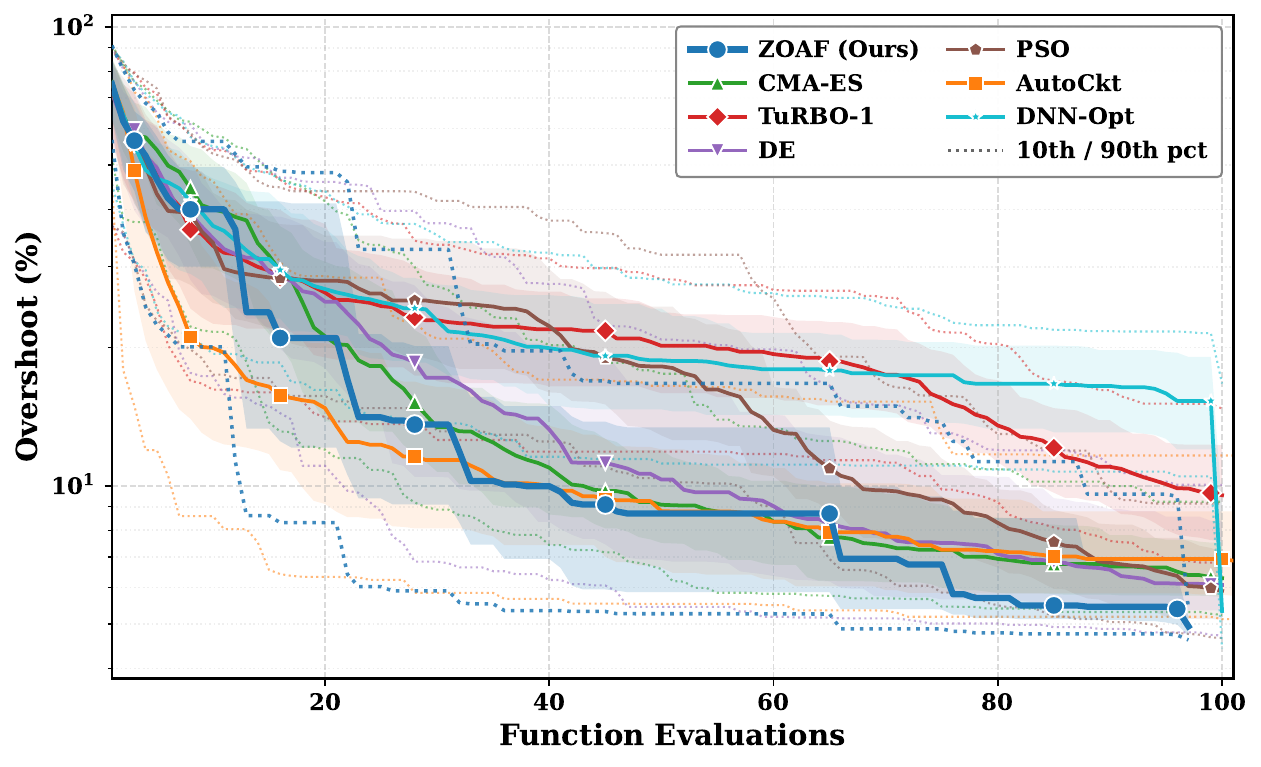}
\end{minipage}\hfill
\begin{minipage}[t]{0.49\textwidth}
\centering
\footnotesize
\captionof{table}{Statistical performance on the Overshoot FOM over 100 independent seeds (100-simulator-call budget per run, in \%; lower is better; best in bold).}
\label{tab:overshoot}
\resizebox{\linewidth}{!}{%
\begin{tabular}{lcccccc}
\toprule
\textbf{\makecell[l]{FOM:\\Overshoot (\%)}} & \textbf{Min} & \textbf{10th pct} & \textbf{Median} & \textbf{Mean} & \textbf{90th pct} & \textbf{Max} \\
\midrule
\textbf{ZOAF}    & 4.41          & \textbf{4.62} & \textbf{4.94} & \textbf{4.98} & \textbf{5.39} & \textbf{6.15} \\
DNN-Opt          & 4.35          & 4.42          & 5.34          & 7.98          & 16.70         & 19.36 \\
PSO              & 4.48          & 4.66          & 5.90          & 6.55          & 9.15          & 17.32 \\
DE               & \textbf{4.32} & 4.72          & 6.08          & 7.03          & 10.05         & 16.49 \\
CMA-ES           & 4.52          & 5.23          & 6.29          & 7.01          & 9.23          & 15.15 \\
AutoCkt          & 4.34          & 5.12          & 6.87          & 7.72          & 11.64         & 21.18 \\
TuRBO-1          & 4.44          & 6.80          & 9.54          & 10.07         & 14.73         & 17.97 \\
\bottomrule
\end{tabular}
}
\end{minipage}
\end{figure*}

\vspace{-8pt}

\noindent\textbf{Result comparison.}
Under a $\sim$100 simulator-call budget aggregated over 100 independent random seeds (Fig.~\ref{fig:peaking_ripple_curves}, Tables~\ref{tab:peaking} and \ref{tab:ripple}), ZOAF achieves the best median final value on both objectives. On peaking, ZOAF dominates every reported statistic with a median of $0.0083$\,dB---more than $5.7\times$ lower than every baseline (PSO $0.0475$, DE $0.0575$, CMA-ES $0.0719$, AutoCkt $0.1094$, DNN-Opt $0.2579$, TuRBO-1 $0.3616$\,dB). On ripple, ZOAF reaches a median of $0.569$\,dB; DNN-Opt is the closest competitor at $0.759$\,dB, and the remaining baselines range from $2.154$\,dB (PSO) to $12.369$\,dB (TuRBO-1). ZOAF also attains the lowest 90th-percentile values on both objectives ($1.305$\,dB ripple, $0.0214$\,dB peaking), indicating robustness to adversarial seeds; in contrast, DNN-Opt exhibits bimodal seed behavior on ripple, with strong best-case percentiles (min $0.0001$\,dB, 10th $0.0007$\,dB) but a median above ZOAF's, reflecting its training-then-verify schedule whose median improvement occurs only at the final SPICE call. In terms of convergence speed, ZOAF reaches TuRBO-1's 100-evaluation median in only 26 evaluations on ripple and 42 on peaking ($3.8\times$ and $2.4\times$ speedup, respectively), and matches every progressive baseline's 100-evaluation median in 26--76 simulator calls ($1.3$--$3.8\times$ speedup across baselines).
As an ablation on the hybrid schedule, for peaking the ZO-RGE phase alone brings the median best-so-far to $0.273$\,dB by eval 51 and the subsequent ZO-CGE phase further refines this to $0.0083$\,dB by eval 97; for ripple, ZO-RGE reaches $6.613$\,dB by eval 51 and ZO-CGE improves it to $0.569$\,dB by eval 97.

Under the same budget (Fig.~\ref{fig:peaking_ripple_curves}, Tables~\ref{tab:peaking} and \ref{tab:ripple}), the learning-based baselines remain less competitive than ZOAF in solution quality. Specifically, on peaking, AutoCkt and DNN-Opt reach medians of $0.1094$ and $0.2579$\,dB respectively, both more than an order of magnitude above ZOAF's $0.0083$\,dB. On ripple, AutoCkt's median is $5.044$\,dB; DNN-Opt's $0.759$\,dB is the closest competitor to ZOAF's $0.569$\,dB, but its convergence trace is effectively a step function (random sampling for 99 calls and a single Actor-recommended design at call 100), offering no usable intermediate results within the budget.

Beyond ripple and peaking, we additionally evaluate the \emph{overshoot} FOM under the same 100-simulator-call budget aggregated over 100 random seeds (Fig.~\ref{fig:overshoot_curves}, Table~\ref{tab:overshoot}). ZOAF wins five of six statistical columns---median ($4.94\%$), mean ($4.98\%$), 10th percentile ($4.62\%$), 90th percentile ($5.39\%$), and max ($6.15\%$)---and is within $0.08$ percentage points of DE's leading minimum ($4.32\%$ vs.\ ZOAF $4.41\%$). The advantage is most pronounced in the tails: every baseline's max lies between $15.15\%$ and $21.18\%$, i.e.\ $2.5$--$3.4\times$ above ZOAF's $6.15\%$, and ZOAF's mean ($4.98\%$) is $1.6\times$ lower than the next-best method (DNN-Opt at $7.98\%$) because every baseline produces a long right-tail of failed-design seeds that ZOAF's progressive refinement avoids. ZOAF is also the only method whose median crosses sub-$5\%$ within the budget (reached at evaluation 97); DNN-Opt ends at $5.34\%$ and every other baseline above $5.9\%$. On convergence speed, ZOAF matches every progressive baseline's 100-evaluation median overshoot in 42--76 simulator calls ($1.3$--$2.4\times$ speedup), with the strongest gain again vs.\ TuRBO-1 (42 evaluations, $2.4\times$). Importantly, the 100-simulator-call budget used throughout this section is not a constrained short-horizon setting but a saturated regime for this benchmark: every method's median best-so-far curve enters a flat tail well before evaluation 100 (Figs.~\ref{fig:peaking_ripple_curves} and~\ref{fig:overshoot_curves}), so further increasing the budget yields negligible additional improvement and the comparisons in Tables~\ref{tab:peaking}--\ref{tab:overshoot} reflect each method's converged performance under matched conditions.

\vspace{-4pt}

\section{Conclusion}\label{sec:conclusion}
Analog/RF sizing suffers from tight evaluation budgets and inaccessible gradients.
Model-based Bayesian optimization often relies on costly, fragile surrogates, while evolutionary algorithms tend to be evaluation-hungry and struggle with constraints.
In response, we proposed \emph{ZOAF}, a two-phase, multi-start projected ZO scheduler that uses random-direction steps for coarse search and coordinate-wise refinements for late-stage accuracy, governed by a lightweight sliding-window controller and rank-based restarts.
Across an RF matcher and two amplifiers under equal simulator-call budgets and shared initialization, ZOAF delivers faster early progress and stronger final designs: it achieves the best median final value on every reported figure of merit (with up to an order-of-magnitude advantage in median peaking on the 22-parameter two-stage amplifier), the most robust worst-case behavior across seeds (overshoot maximum $2.5$--$3.4\times$ lower than every baseline), and $1.3$--$3.8\times$ fewer simulator calls to convergence---while remaining a pure black-box, budget-aware optimizer that does not require simulator source-code access or heavy surrogate tuning.
Future work includes making ZOAF variation-aware via stochastic or variance-reduced ZO estimators, principled corner/Monte Carlo scheduling and multi-fidelity allocation, and extensions to discrete choices and multi-objective design.

\section*{Acknowledgments}
The authors thank the National Institute of Standards and Technology (NIST) for supporting this work under Award \#70NANB24H084.

\textit{Generative AI Use Disclosure.} ChatGPT and Claude were used as writing assistants for minor grammar and phrasing polish on author-drafted text. All technical content and claims are the authors' own.

\appendix
\section*{Smoothed Objective for ZO-RGE and Proofs}\label{app:zo_rge_proofs}
We focus on the (unclipped) ZO-RGE estimator in Eq.~(4) and analyze its
properties through a smoothed objective. The results are standard in zeroth-order
optimization and included here for completeness.

\section*{Setup and assumptions}
Let $u\sim \mathcal{N}(0,I_d)$ and $\mu>0$.
Define the Gaussian-smoothed objective
\begin{equation}
f_\mu(x)\triangleq \mathbb{E}_u\big[f(x+\mu u)\big].
\end{equation}
We assume:

\textbf{Assumption 1 (Integrability).} $\mathbb{E}|f(x+\mu u)|<\infty$ and
$\mathbb{E}\|\nabla f(x+\mu u)\|<\infty$ for the $x$ of interest.

\textbf{Assumption 2 ($L$-smoothness).} $f$ is differentiable and $\nabla f$ is $L$-Lipschitz:
$\|\nabla f(x)-\nabla f(y)\|\le L\|x-y\|$.

\textbf{Assumption 3 ($G$-Lipschitzness, local).} On a $\mu$-neighborhood of $\Omega$,
$|f(x)-f(y)|\le G\|x-y\|$.

\section*{Lemma: Gaussian smoothing gradient identity}
\begin{lemma}\label{lem:stein}
For $f_\mu(x)=\mathbb{E}[f(x+\mu u)]$ with $u\sim\mathcal{N}(0,I_d)$,
\begin{equation}\label{eq:grad_smooth_identity}
\nabla f_\mu(x)=\frac{1}{\mu}\mathbb{E}\big[f(x+\mu u)\,u\big].
\end{equation}
\end{lemma}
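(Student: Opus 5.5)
The plan is to write $f_\mu$ as a convolution of $f$ with a Gaussian kernel and move the derivative onto the kernel, whose gradient is explicit. With $\varphi(u)=(2\pi)^{-d/2}e^{-\|u\|^2/2}$ the standard normal density, the substitution $y=x+\mu u$ gives
\begin{equation*}
f_\mu(x)=\int_{\mathbb{R}^d} f(x+\mu u)\,\varphi(u)\,du=\frac{1}{\mu^d}\int_{\mathbb{R}^d} f(y)\,\varphi\!\Big(\frac{y-x}{\mu}\Big)\,dy .
\end{equation*}
In this form $x$ appears only inside the smooth kernel. Since $\nabla\varphi(v)=-v\,\varphi(v)$, we have
\begin{equation*}
\nabla_x\,\varphi\!\Big(\frac{y-x}{\mu}\Big)=\frac{1}{\mu}\cdot\frac{y-x}{\mu}\,\varphi\!\Big(\frac{y-x}{\mu}\Big).
\end{equation*}
Differentiating under the integral and substituting back $u=(y-x)/\mu$ then gives $\nabla f_\mu(x)=\frac{1}{\mu}\int f(x+\mu u)\,u\,\varphi(u)\,du=\frac{1}{\mu}\mathbb{E}[f(x+\mu u)\,u]$, which is \eqref{eq:grad_smooth_identity}.

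The main obstacle is justifying the interchange of gradient and integral. I will use dominated convergence on difference quotients, since the theorem on differentiation under the integral sign requires a dominating function for $\|f(y)\|\,\|\nabla_x\varphi((y-x)/\mu)\|$ that is uniform for $x'$ in a neighbourhood of $x$. The Gaussian tail absorbs polynomial factors, and locally uniformly in $x'$ we have $\varphi((y-x')/\mu)\le C\,\varphi((y-x)/(\sqrt2\mu))$. So it suffices that $f$ is integrable against a slightly wider Gaussian. Assumption~1 alone is stated only at the given $x$, so I will need a growth condition from the standing assumptions to close this step. Assumption~3 (Lipschitz near $\Omega$) or Assumption~2 ($L$-smooth, hence $|f(y)|\le|f(x)|+\|\nabla f(x)\|\|y-x\|+\tfrac L2\|y-x\|^2$) supplies at most quadratic growth, which is integrable against any Gaussian. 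I will invoke Assumption~2 for this, because it is global.

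As a cross-check, I will also verify the identity by the alternative route, Stein's lemma. Under Assumption~2 we may differentiate $f_\mu(x)=\mathbb{E}[f(x+\mu u)]$ directly, getting $\nabla f_\mu(x)=\mathbb{E}[\nabla f(x+\mu u)]$; the interchange is justified by the integrability of $\nabla f$ in Assumption~1 together with linear growth of $\nabla f$. Gaussian integration by parts, $\mathbb{E}[\partial_i g(u)]=\mathbb{E}[u_i g(u)]$, applied to $g(u)=f(x+\mu u)$ with $\partial_i g(u)=\mu\,\partial_i f(x+\mu u)$, then yields the same formula. The boundary terms in that integration by parts vanish because $f\varphi\to0$ at infinity under quadratic growth.

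Either route suffices. I will present the convolution argument as the main proof, since it needs only continuity and growth of $f$, and mention the Stein route as a remark.
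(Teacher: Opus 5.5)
Your proposal is correct, but your main argument takes a different route from the paper's. The paper's proof is essentially your cross-check. It differentiates under the expectation to get $\nabla f_\mu(x)=\mathbb{E}[\nabla f(x+\mu u)]$, citing Assumption~1, and then applies Stein's identity to $\phi(u)=f(x+\mu u)$ to turn this into $\frac{1}{\mu}\mathbb{E}[f(x+\mu u)\,u]$.

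Your convolution route instead puts all the $x$-dependence into the Gaussian kernel and differentiates $\varphi((y-x)/\mu)$, so $f$ itself is never differentiated. It needs only that $f$ be measurable (not even continuous) and integrable against a slightly wider Gaussian, locally uniformly in $x$. You obtain that correctly from $\varphi((y-x')/\mu)\le e^{r^2/(2\mu^2)}\,\varphi\bigl((y-x)/(\sqrt{2}\mu)\bigr)$ for $\|x'-x\|\le r$, combined with the quadratic growth that Assumption~2 implies.

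This buys you two things. First, the identity holds for nonsmooth objectives with mild growth, which matters here because FOMs such as peaking and ripple are defined through $\max/\min$ over frequency. Second, the interchange of gradient and expectation is genuinely justified. You are right that Assumption~1, stated only at the point of interest, supplies integrability but not the locally uniform dominating function needed to differentiate under the integral. The paper takes that step for granted, along with the unspecified ``suitable $\phi$'' hypothesis in Stein's identity.

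What the paper's route buys is brevity and the intermediate identity $\nabla f_\mu(x)=\mathbb{E}[\nabla f(x+\mu u)]$. That identity is exactly what the smoothing-bias bound in Proposition~\ref{prop:smooth_bias} relies on. If you present only the convolution proof, keep your Stein-route remark, with its linear-growth domination of $\nabla f$, so that this identity is available for the next result.
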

\begin{IEEEproof}
By A1 and differentiating under the expectation,
$\nabla f_\mu(x)=\mathbb{E}[\nabla f(x+\mu u)]$.
For a standard normal $u$, Stein's identity gives
$\mathbb{E}[u\,\phi(u)]=\mathbb{E}[\nabla_u \phi(u)]$ for suitable $\phi$.
Let $\phi(u)=f(x+\mu u)$; then $\nabla_u \phi(u)=\mu \nabla f(x+\mu u)$.
Thus $\mathbb{E}[u f(x+\mu u)]=\mu \mathbb{E}[\nabla f(x+\mu u)]$,
which yields~\eqref{eq:grad_smooth_identity}.
\end{IEEEproof}

\section*{Unbiasedness of the two-point ZO-RGE estimator}
Define the one-sample two-point estimator
\begin{equation}\label{eq:one_sample_est}
g(x;u)\triangleq \frac{f(x+\mu u)-f(x-\mu u)}{2\mu}\,u,
\end{equation}
and the $N$-sample averaged ZO-RGE estimator
\begin{equation}\label{eq:rge_est}
\hat g(x)\triangleq \frac{1}{N}\sum_{k=1}^N g(x;u_k),
\quad u_k \stackrel{i.i.d.}{\sim}\mathcal{N}(0,I_d).
\end{equation}

\begin{proposition}\label{prop:unbiased}
$\mathbb{E}[\hat g(x)]=\nabla f_\mu(x)$.
\end{proposition}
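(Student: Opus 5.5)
By linearity of expectation and the fact that the $u_k$ are i.i.d.\ copies of $u\sim\mathcal{N}(0,I_d)$, it suffices to show $\mathbb{E}[g(x;u)]=\nabla f_\mu(x)$ for a single sample. Then $\mathbb{E}[\hat g(x)]=\frac1N\sum_{k=1}^N\mathbb{E}[g(x;u_k)]$ follows immediately. Assumption~1 guarantees that each term is integrable: $\|f(x\pm\mu u)u\|$ has finite mean, since $f$ has at most linear growth under Assumption~3 and Gaussian moments are finite. This justifies splitting the expectation.

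For the single-sample claim, I will expand
\[
\mathbb{E}[g(x;u)]=\frac{1}{2\mu}\Bigl(\mathbb{E}\bigl[f(x+\mu u)\,u\bigr]-\mathbb{E}\bigl[f(x-\mu u)\,u\bigr]\Bigr).
\]
The key observation is the symmetry of the standard Gaussian: $-u\stackrel{d}{=}u$. Substituting $v=-u$ in the second expectation gives $\mathbb{E}[f(x-\mu u)\,u]=-\mathbb{E}[f(x+\mu v)\,v]$. The bracket therefore equals $2\,\mathbb{E}[f(x+\mu u)\,u]$, and hence
\[
\mathbb{E}[g(x;u)]=\frac1\mu\mathbb{E}\bigl[f(x+\mu u)\,u\bigr].
\]

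Lemma~\ref{lem:stein} identifies the right-hand side as $\nabla f_\mu(x)$, which completes the argument.

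The only delicate step is making sure the integrability needed to separate the two expectations and to invoke Lemma~\ref{lem:stein} is actually available. I would handle this by citing Assumption~1 directly, or by falling back on the bound $|f(x+\mu u)|\le|f(x)|+G\mu\|u\|$ from Assumption~3 together with $\mathbb{E}\|u\|^2=d$. Everything else is a routine change of variables.
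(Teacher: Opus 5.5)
Your proof is correct and follows the paper's argument essentially line for line: reduce to a single sample by linearity and i.i.d.\ sampling, use $u\stackrel{d}{=}-u$ to turn the two-point difference into $\frac{1}{\mu}\mathbb{E}[f(x+\mu u)\,u]$, and conclude with Lemma~\ref{lem:stein}. One small caution on your integrability aside: Assumption~3 is only local (on a $\mu$-neighborhood of $\Omega$), so it does not give the global bound $|f(x+\mu u)|\le|f(x)|+G\mu\|u\|$ for unbounded Gaussian $u$; the finiteness of $\mathbb{E}\|f(x+\mu u)\,u\|$ is better treated as part of the standing hypotheses under which Lemma~\ref{lem:stein} holds, which is what the paper implicitly does.
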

\begin{IEEEproof}
By i.i.d. sampling it suffices to show $\mathbb{E}[g(x;u)]=\nabla f_\mu(x)$.
Using symmetry of $u$ (i.e., $u\overset{d}{=}-u$),
\[
\mathbb{E}[f(x-\mu u)\,u]
=\mathbb{E}[f(x+\mu u)\,(-u)]
=-\mathbb{E}[f(x+\mu u)\,u].
\]
Therefore,
\begin{align}
\mathbb{E}[g(x;u)]
&= \frac{1}{2\mu}\Big(\mathbb{E}[f(x+\mu u)\,u]-\mathbb{E}[f(x-\mu u)\,u]\Big) \nonumber\\
&= \frac{1}{\mu}\mathbb{E}[f(x+\mu u)\,u]. \label{eq:appendix_eg_identity}
\end{align}
Applying Lemma~\ref{lem:stein} yields
$\mathbb{E}[g(x;u)]=\nabla f_\mu(x)$, hence $\mathbb{E}[\hat g(x)]=\nabla f_\mu(x)$.
\end{IEEEproof}

\section*{Smoothing bias bound}
\begin{proposition}\label{prop:smooth_bias}
If $f$ is $L$-smooth (Assumption~2), then
$\|\nabla f_\mu(x)-\nabla f(x)\|\le L\mu\,\mathbb{E}\|u\|\le L\mu\sqrt{d}$.
\end{proposition}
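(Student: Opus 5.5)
The plan is to express $\nabla f_\mu$ as an average of the true gradient over Gaussian perturbations, and then compare that average with $\nabla f(x)$ using $L$-smoothness.

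First, as in the proof of Lemma~\ref{lem:stein}, Assumption~1 justifies differentiating under the expectation, which gives $\nabla f_\mu(x)=\mathbb{E}_u[\nabla f(x+\mu u)]$ with $u\sim\mathcal{N}(0,I_d)$. Because $\nabla f(x)$ is deterministic, we can write
\[
\nabla f_\mu(x)-\nabla f(x)=\mathbb{E}_u\bigl[\nabla f(x+\mu u)-\nabla f(x)\bigr].
\]

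Next, Jensen's inequality for the convex function $\|\cdot\|$ (that is, $\|\mathbb{E}Z\|\le\mathbb{E}\|Z\|$) together with Assumption~2 gives
\[
\|\nabla f_\mu(x)-\nabla f(x)\|\le\mathbb{E}\|\nabla f(x+\mu u)-\nabla f(x)\|\le L\mu\,\mathbb{E}\|u\|.
\]
This is the first inequality in the statement.

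For the second inequality, apply Jensen again (or Cauchy--Schwarz): $\mathbb{E}\|u\|\le\sqrt{\mathbb{E}\|u\|^2}=\sqrt{\operatorname{tr}I_d}=\sqrt d$. The same bound holds for the normalized sphere directions $u=\sqrt d\,g/\|g\|$ mentioned in Sec.~\ref{subsec:zo_hybrid}, where $\|u\|=\sqrt d$ exactly. The smoothing in the appendix is Gaussian, however, so only the Gaussian case is needed here.

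The only step that needs care is the interchange of gradient and expectation. Assumption~2 makes $\nabla f$ continuous with linear growth, $\|\nabla f(x+\mu u)\|\le\|\nabla f(x)\|+L\mu\|u\|$. This gives an integrable dominating function that is uniform for $x$ in a neighborhood, so dominated convergence applies.

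If one prefers to avoid differentiating under the integral, there is an alternative route. Write $f_\mu(x)=\int f(y)\,\varphi_\mu(y-x)\,dy$, where $\varphi_\mu$ is the $\mathcal{N}(0,\mu^2 I_d)$ density, and change variables. This leads to the same identity, and the rest of the argument is routine.
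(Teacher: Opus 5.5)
Your proposal is correct and follows the paper's argument: write $\nabla f_\mu(x)-\nabla f(x)=\mathbb{E}[\nabla f(x+\mu u)-\nabla f(x)]$, then apply Jensen, the $L$-Lipschitz gradient, and $\mathbb{E}\|u\|\le\sqrt{\mathbb{E}\|u\|^2}=\sqrt d$. Your dominated-convergence justification for exchanging gradient and expectation, using the linear growth of $\nabla f$ implied by Assumption~2, is a useful addition, since the paper simply attributes this step to Assumption~1.
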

\begin{IEEEproof}
By definition and Jensen's inequality,
\begin{align}
\|\nabla f_\mu(x)-\nabla f(x)\|
&= \Big\|\mathbb{E}\big[\nabla f(x+\mu u)-\nabla f(x)\big]\Big\| \nonumber\\
&\le \mathbb{E}\big[\|\nabla f(x+\mu u)-\nabla f(x)\|\big]. \label{eq:appendix_bias_jensen}
\end{align}
Using $L$-Lipschitzness of $\nabla f$ gives
$\|\nabla f(x+\mu u)-\nabla f(x)\|\le L\mu\|u\|$,
hence $\|\nabla f_\mu(x)-\nabla f(x)\|\le L\mu\,\mathbb{E}\|u\|$.
For $u\sim\mathcal{N}(0,I_d)$, $\mathbb{E}\|u\|\le \sqrt{\mathbb{E}\|u\|^2}=\sqrt{d}$.
\end{IEEEproof}

\section*{Variance (mean-square error) scaling with $N$}
\begin{proposition}\label{prop:var}
If $f$ is $G$-Lipschitz on a $\mu$-neighborhood of $\Omega$ (Assumption~3), then
\begin{align}
\mathbb{E}\|\hat g(x)-\nabla f_\mu(x)\|^2
&\le \frac{1}{N}\,\mathbb{E}\|g(x;u)-\mathbb{E}g(x;u)\|^2 \nonumber\\
&\le \frac{1}{N}\,\mathbb{E}\|g(x;u)\|^2 \nonumber\\
&\le \frac{G^2\,\mathbb{E}\|u\|^4}{N} \nonumber\\
&= \frac{G^2\,d(d+2)}{N}. \label{eq:appendix_var_bound}
\end{align}
\end{proposition}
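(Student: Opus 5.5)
We prove the chain of inequalities in the statement one link at a time, in the order written.

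\emph{First inequality.} By Proposition~\ref{prop:unbiased}, $\mathbb{E}[\hat g(x)]=\nabla f_\mu(x)$ and $\mathbb{E}[g(x;u)]=\nabla f_\mu(x)$. Hence
\[
\hat g(x)-\nabla f_\mu(x)=\frac1N\sum_{k=1}^N \xi_k,
\qquad \xi_k\triangleq g(x;u_k)-\mathbb{E}g(x;u).
\]
The $\xi_k$ are i.i.d.\ with zero mean. Expanding $\mathbb{E}\|\sum_k\xi_k\|^2$, every cross term $\mathbb{E}\langle\xi_j,\xi_k\rangle$ with $j\neq k$ factorizes by independence into $\langle\mathbb{E}\xi_j,\mathbb{E}\xi_k\rangle=0$. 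This leaves $N\,\mathbb{E}\|\xi_1\|^2$, and dividing by $N^2$ gives the first bound, in fact with equality.

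\emph{Second inequality.} This is the bias--variance identity
\[
\mathbb{E}\|g-\mathbb{E}g\|^2=\mathbb{E}\|g\|^2-\|\mathbb{E}g\|^2\le \mathbb{E}\|g\|^2 .
\]

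\emph{Third inequality.} This is the only place Assumption~3 enters. Write
\[
\|g(x;u)\|^2=\frac{\bigl(f(x+\mu u)-f(x-\mu u)\bigr)^2}{4\mu^2}\,\|u\|^2 .
\]
The two probe points differ by $2\mu u$, so $G$-Lipschitzness bounds the numerator by $G^2\cdot 4\mu^2\|u\|^2$. Therefore $\|g\|^2\le G^2\|u\|^4$ pointwise, and taking expectations gives the third line.

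There is a subtlety here, and it is the main obstacle. Assumption~3 is only local, on a $\mu$-neighborhood of $\Omega$, whereas a Gaussian $u$ is unbounded, so $x\pm\mu u$ can leave that neighborhood. I would handle this in one of two ways. The first is to interpret the assumption as holding on the region actually reached by the probes, i.e.\ to treat $G$ as a global Lipschitz constant on the relevant domain; this is the standard convention in the ZO literature. The second is to note that under clipping the probes stay in the box. Since the appendix analyzes the unclipped estimator, I would state the first reading explicitly and proceed.

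\emph{Final equality.} Compute $\mathbb{E}\|u\|^4$ for $u\sim\mathcal N(0,I_d)$. We have $\|u\|^2\sim\chi^2_d$, with mean $d$ and variance $2d$, so
\[
\mathbb{E}\|u\|^4=\mathrm{Var}(\|u\|^2)+\bigl(\mathbb{E}\|u\|^2\bigr)^2=2d+d^2=d(d+2).
\]
The same value also follows directly from $\mathbb{E}u_i^4=3$ and $\mathbb{E}u_i^2u_j^2=1$ for $i\neq j$. Substituting yields $G^2d(d+2)/N$, which establishes the $\mathcal O(d^2/N)$ scaling cited in the main text.
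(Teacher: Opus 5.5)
Your proof is correct and follows the paper's argument link for link. The steps match: i.i.d.\ averaging for the first bound (which you rightly note is in fact an equality), $\mathbb{E}\|X-\mathbb{E}X\|^2\le\mathbb{E}\|X\|^2$, the pointwise Lipschitz bound $\|g(x;u)\|\le G\|u\|^2$, and the $\chi^2_d$ fourth moment $\mathbb{E}\|u\|^4=d(d+2)$. Your remark that Assumption~3 is only local while the Gaussian probes $x\pm\mu u$ are unbounded identifies a gap the paper leaves implicit. Reading $G$ as a Lipschitz constant on the whole region the probes can reach, which is effectively global, is the appropriate repair.
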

\begin{IEEEproof}
The first inequality is the standard variance reduction for averaging i.i.d.
samples. Next,
$\mathbb{E}\|X-\mathbb{E}X\|^2\le \mathbb{E}\|X\|^2$.
By $G$-Lipschitzness,
$|f(x+\mu u)-f(x-\mu u)|\le G\|2\mu u\|=2G\mu\|u\|$,
so from~\eqref{eq:one_sample_est},
\[
\|g(x;u)\|
=\Big\|\frac{f(x+\mu u)-f(x-\mu u)}{2\mu}\,u\Big\|
\le G\|u\|^2.
\]
Thus $\mathbb{E}\|g(x;u)\|^2\le G^2\mathbb{E}\|u\|^4$.
For $u\sim\mathcal{N}(0,I_d)$, $\|u\|^2\sim\chi^2_d$ and
$\mathbb{E}\|u\|^4=d(d+2)$, yielding the claim.
\end{IEEEproof}

\section*{Error Bound and Deterministic Property of ZO-CGE}

While ZO-RGE provides an unbiased estimate of the smoothed gradient, its sampling variance necessitates a transition to a stable estimator during late-stage refinement. 

\textbf{Assumption 4 (Third-order smoothness).} The objective $f$ is three times continuously differentiable, with third derivatives bounded by $M_3 > 0$, i.e., $|\nabla^3_{iii} f(x)| \le M_3$ for all $i \in \{1, \dots, d\}$.

\textbf{Proposition 4.} Under Assumption 4, the ZO-CGE estimator $\hat{g}_{\text{CGE}}(x)$ exhibits zero sampling variance, and its truncation error is bounded by $\mathcal{O}(\mu^2)$:
\begin{equation}
    |\hat{g}_{\text{CGE}}(x)[i] - \nabla_i f(x)| \le \frac{M_3}{6}\mu^2 = \mathcal{O}(\mu^2).
    \label{eq:cge_bound}
\end{equation}

\textit{Proof.} Taking the difference of the Taylor series expansions of $f(x + \mu e_i)$ and $f(x - \mu e_i)$ around $x$, the zeroth- and second-order terms cancel while the first-order terms combine to $2\mu\,\nabla_i f(x)$; dividing by $2\mu$ yields:
\begin{equation}
    \hat{g}_{\text{CGE}}(x)[i] = \nabla_i f(x) + \frac{\mu^2}{12} \Big[ \nabla_{iii}^3 f(x + \xi_1 \mu e_i) + \nabla_{iii}^3 f(x - \xi_2 \mu e_i) \Big],
\end{equation}
where $\xi_1, \xi_2 \in (0, 1)$. Applying the triangle inequality and Assumption 4 directly gives $|\hat{g}_{\text{CGE}}(x)[i] - \nabla_i f(x)| \le \frac{M_3}{6} \mu^2$. Since $e_i$ are fixed bases and the evaluations are deterministic, the sampling variance is exactly zero. \hfill $\blacksquare$

\textit{Remark.} As the search approaches a local optimum ($\|\nabla f(x)\| \to 0$), the $\mathcal{O}(d^2/N)$ variance of ZO-RGE would overpower the gradient signal. ZO-CGE circumvents this, providing the deterministic, coordinate-wise precision required for stringent analog design closure.

\section*{Convergence Guarantee for Non-Convex Objectives}
To guarantee that the ZO algorithm does not diverge in the highly nonconvex analog/RF search landscape, we analyze the convergence rate of the ZO-RGE phase. We consider the equivalent minimization problem $F(x) = -f(x)$ with a fixed learning rate $\eta$.

\textbf{Assumption 5 (Bounded objective).} $F(x)$ is bounded below by a finite $F^*$.

\textbf{Proposition 5.} Under $L$-smoothness (Assumption~2) and a stochastic gradient variance bounded by $\sigma^2$, applying the ZO-RGE update $x_{t+1} = x_{t} - \eta \hat{g}_{t}$ for $T$ iterations with $\eta \le \frac{1}{4Ld}$ yields:
\begin{equation}
    \frac{1}{T} \sum_{t=1}^T \mathbb{E} \|\nabla F(x_t)\|^2 \le \frac{4[F(x_0) - F^*]}{\eta T} + 2 \eta L d \sigma^2 + 2 \mu^2 L^2 d.
    \label{eq:convergence_bound}
\end{equation}

\textit{Proof.} By the $L$-smoothness of $F(x)$, the expected descent step is bounded by:
\begin{equation}
    \mathbb{E}[F(x_{t+1})] \le F(x_t) - \eta \langle \nabla F(x_t), \mathbb{E}[\hat{g}_t] \rangle + \frac{\eta^2 L}{2} \mathbb{E}[\|\hat{g}_t\|^2].
\end{equation}
Using $\mathbb{E}[\hat{g}_t] = \nabla F_{\mu}(x_t)$ (Proposition 1) and the algebraic identity $-\langle a, b \rangle = \frac{1}{2}\|a-b\|^2 - \frac{1}{2}\|a\|^2 - \frac{1}{2}\|b\|^2$, alongside the smoothing bias bound $\|\nabla F - \nabla F_{\mu}\|^2 \le \mu^2 L^2 d$ (Proposition 2), we obtain:
\begin{equation}
    \mathbb{E}[F(x_{t+1})] \le F(x_t) - \frac{\eta}{2} \|\nabla F(x_t)\|^2 + \frac{\eta \mu^2 L^2 d}{2} + \frac{\eta^2 L}{2} \mathbb{E}[\|\hat{g}_t\|^2].
\end{equation}
Rearranging to bound $\|\nabla F(x_t)\|^2$ and telescoping the sum from $t=1$ to $T$ gives:
\begin{equation}
    \frac{1}{T} \sum_{t=1}^T \mathbb{E} \|\nabla F(x_t)\|^2 \le \frac{2[F(x_0) - F^*]}{\eta T} + \mu^2 L^2 d + \frac{\eta L}{T} \sum_{t=1}^T \mathbb{E}[\|\hat{g}_t\|^2].
\end{equation}
Applying the standard second-moment bound $\mathbb{E}[\|\hat{g}_t\|^2] \le 2d\|\nabla F(x_t)\|^2 + 2d\sigma^2$ and letting $Q = \frac{1}{T}\sum_{t=1}^T \mathbb{E}\|\nabla F(x_t)\|^2$, one obtains $(1 - 2\eta Ld)\,Q \le \frac{2[F(x_0)-F^*]}{\eta T} + \mu^2 L^2 d + 2\eta Ld\sigma^2$. Under $\eta \le \frac{1}{4Ld}$, we have $1 - 2\eta Ld \ge \frac{1}{2}$; dividing both sides by $\frac{1}{2}$ yields \eqref{eq:convergence_bound}. \hfill $\blacksquare$

\textit{Remark.} Setting $\eta = \mathcal{O}(1/\sqrt{dT})$ simplifies the bound to $\mathcal{O}(\sqrt{d/T} + \mu^2 d)$. This proves that with a sufficient simulation budget $T$ and a small smoothing radius $\mu$, the gradient norm deterministically converges to a neighborhood of zero, ensuring robust sizing across multi-modal response surfaces.

\section*{Additional Gain/GBP/Power Comparison}

\begin{table*}[t]
\centering
\begingroup
\setlength{\tabcolsep}{4pt}
\renewcommand{\arraystretch}{1.0}
\footnotesize
\setlength{\belowcaptionskip}{6pt}
\caption{Supplementary results for the Three-Stage Op-Amp experiment: performance comparison of different optimization methods in terms of gain, gain-bandwidth product (GBP), and power efficiency (best/mean/std).}
\label{tab:gain_gbp_power}
\begin{tabular*}{\textwidth}{@{\extracolsep{\fill}}lccc ccc ccc}
\toprule
\multirow{2}{*}{\textbf{Method}} 
& \multicolumn{3}{c}{\textbf{Gain}}
& \multicolumn{3}{c}{\textbf{GBP (MHz)}}
& \multicolumn{3}{c}{\textbf{Power Efficiency}} \\
\cmidrule(lr){2-4} \cmidrule(lr){5-7} \cmidrule(lr){8-10}
& \textbf{Best} & \textbf{Mean} & \textbf{Std}
& \textbf{Best} & \textbf{Mean} & \textbf{Std}
& \textbf{Best} & \textbf{Mean} & \textbf{Std} \\
\midrule
ZO       & $9.73\times10^{8}$ & $9.73\times10^{8}$ & $0$
         & $1.01\times10^{7}$ & $1.01\times10^{7}$ & $0$
         & $9.33\times10^{2}$ & $787.53$ & $92$ \\

CMA-ES   & $9.73\times10^{8}$ & $9.73\times10^{8}$ & $0$
         & $5.66\times10^{4}$ & $5.05\times10^{4}$ & $9.52\times10^{2}$
         & $9.05\times10^{2}$ & $782.13$ & $82.75$ \\

TuRBO-1  & $1.48\times10^{8}$ & $1.25\times10^{8}$ & $5.42\times10^{6}$
         & $1.38\times10^{4}$ & $1.19\times10^{4}$ & $6.78\times10^{2}$
         & $9.40\times10^{2}$ & $873.38$ & $51.52$ \\

GP-EI    & $6.29\times10^{4}$ & $3.74\times10^{4}$ & $1.78\times10^{4}$
         & $1.36\times10^{4}$ & $1.01\times10^{4}$ & $2.55\times10^{2}$
         & $8.27\times10^{2}$ & $751.07$ & $42.32$ \\

GP-PI    & $8.62\times10^{5}$ & $4.75\times10^{5}$ & $1.33\times10^{5}$
         & $1.98\times10^{2}$ & $1.17\times10^{2}$ & $5.40\times10^{1}$
         & $8.04\times10^{2}$ & $721.45$ & $43.75$ \\

AutoCkt  & $2.38\times10^{7}$ & $1.41\times10^{7}$ & $8.27\times10^{6}$
         & $7.68\times10^{5}$ & $4.57\times10^{5}$ & $1.13\times10^{5}$
         & $8.25\times10^{2}$ & $696.97$ & $84.33$ \\

DNN-Opt  & $9.04\times10^{8}$ & $2.77\times10^{8}$ & $3.70\times10^{8}$
         & $5.42\times10^{6}$ & $2.84\times10^{6}$ & $2.11\times10^{6}$
         & $6.84\times10^{2}$ & $547.24$ & $75.41$ \\
\bottomrule
\end{tabular*}
\endgroup
\end{table*}

As a supplementary analysis for the Three-Stage Op-Amp experiment, Table~\ref{tab:gain_gbp_power} reports best/mean/std statistics to quantify statistical significance and uncertainty across methods, and highlights a clear trade-off across methods. The proposed ZO method reaches the strongest gain and GBP with zero run-to-run variation in this benchmark (best/mean identical), indicating highly stable convergence for these two objectives. CMA-ES remains competitive on gain but is substantially weaker on GBP, while GP-EI/GP-PI show consistently lower gain and larger dispersion. TuRBO-1 attains the highest power-efficiency best/mean values, but this advantage comes with markedly lower gain and GBP than ZO. Machine learning and reinforcement learning based baselines (AutoCkt and DNN-Opt) improve over GP methods on some metrics but still trail ZO in either absolute peak performance or consistency. Overall, these results support the use of ZO as a robust high-accuracy choice when gain/GBP quality and repeatability are prioritized, with TuRBO-1 offering an alternative when power efficiency is the primary target.

\bibliographystyle{IEEEtran}
\bibliography{refs}

\newpage

% \section*{Biography Section}
% If you have an EPS/PDF photo (graphicx package needed), extra braces are
%  needed around the contents of the optional argument to biography to prevent
%  the LaTeX parser from getting confused when it sees the complicated
%  $\backslash${\tt{includegraphics}} command within an optional argument. (You can create
%  your own custom macro containing the $\backslash${\tt{includegraphics}} command to make things
%  simpler here.)
 
% \vspace{11pt}

% \bf{If you include a photo:}\vspace{-33pt}
% \begin{IEEEbiography}[{\includegraphics[width=1in,height=1.25in,clip,keepaspectratio]{fig1}}]{Michael Shell}
% Use $\backslash${\tt{begin\{IEEEbiography\}}} and then for the 1st argument use $\backslash${\tt{includegraphics}} to declare and link the author photo.
% Use the author name as the 3rd argument followed by the biography text.
% \end{IEEEbiography}

% \vspace{11pt}

% \bf{If you will not include a photo:}\vspace{-33pt}
% \begin{IEEEbiographynophoto}{John Doe}
% Use $\backslash${\tt{begin\{IEEEbiographynophoto\}}} and the author name as the argument followed by the biography text.
% \end{IEEEbiographynophoto}

\vfill

\end{document}